\newtheorem{theorem}{Theorem}
\newtheorem{theorem1}{Theorem}
\newtheorem{corollary}[theorem1]{Corollary}
\begin{document}

\title{Spectrum Refarming: A New Paradigm of Spectrum Sharing for Cellular Networks}

\author{Shiying~Han,~\IEEEmembership{Student Member,~IEEE,}~Ying-Chang Liang,~\IEEEmembership{Fellow,~IEEE,}\\and~Boon-Hee~Soong,~\IEEEmembership{Senior Member,~IEEE}
\thanks{S. Han and B.-H. Soong are with the School of Electrical and Electronic Engineering, Nanyang Technological University, Singapore, 639798 (e-mail: shan1@e.ntu.edu.sg, ebhsoong@ntu.edu.sg).}%
\thanks{Y.-C. Liang is with the Institute for Infocomm Research, A*STAR, 1 Fusionopolis Way, Connexis, Singapore 138632 (e-mail: ycliang@i2r.a-star.edu.sg).}
\thanks{Part of this work was submitted to IEEE Globecom 2014 in Austin, TX USA \cite{IEEEhowto: Han}.}
}



\maketitle
\IEEEpeerreviewmaketitle

\begin{abstract}
Spectrum refarming (SR) refers to a radio resource management technique which allows different generations of cellular networks to operate in the same radio spectrum. In this paper, an underlay SR model is proposed, in which an Orthogonal Frequency Division Multiple Access (OFDMA) system refarms the spectrum of a Code Division Multiple Access (CDMA) system through intelligently exploiting the interference margin provided by the CDMA system.
We investigate the mutual effect of the two systems by evaluating the asymptotic signal-to-interference-plus-noise ratio (SINR) of the users, based on which the interference margin tolerable by the CDMA system is determined. By using the interference margin together with the transmit power constraints, the uplink resource allocation problem of OFDMA system is formulated and solved through dual decomposition method. Simulation results have verified our theoretical analysis, and validated the effectiveness of the proposed resource allocation algorithm and its capability to protect the legacy CDMA users.
The proposed SR system requires the least information flow from the CDMA system to the OFDMA system, and importantly, no upgrading of legacy CDMA system is needed; thus it can be deployed by telecom operators to maximize the spectral efficiency of their cellular networks.

\end{abstract}

\begin{IEEEkeywords}

Spectrum refarming (SR), underlay spectrum sharing, resource allocation, OFDMA, CDMA, asymptotic analysis, random matrix theory, dual decomposition.

\end{IEEEkeywords}

\section{Introduction}

\IEEEPARstart{T}{o} meet the increasing demand on high capacity, high data rate, and low latency, in the past two decades, the cellular networks have evolved from the 2nd generation (2G) network, such as Global System for Mobile Communications (GSM), to the 3rd generation (3G) network, such as wideband code division multiple access (CDMA) and CDMA-2000, and to the 4th generation (4G) network, such as long-term evolution (LTE), and LTE-advanced. When a new generation (i.e., 4G) network is introduced, the mobile traffic will gradually migrate from the older generation (i.e., 2G/3G) networks towards the new one \cite{IEEEhowto: Cisco}. Fig.~\ref{fig.1} illustrates the evolution of subscription as cellular technology evolves from one generation to another \cite{IEEEhowto: Pysavy}. As can be seen, the legacy networks will not cease operations immediately after the new one is deployed, and in fact they will have to continue to provide services to the legacy users for a significant period of time before being phased out. During the transition period, the spectrum assigned to the legacy networks will experience low spectral efficiency when the number of legacy users is lower than the designed network capacity.

Spectrum refarming (SR) is an innovative spectrum sharing technique which allows different generations of cellular networks to operate in the same radio spectrum. By doing so, the spectral efficiency of cellular networks can be greatly improved. Since radio spectrum is a limited and expensive resource, SR is considered as a promising solution for mobile service operators not only to provide cost effective services to their customers, but also to solve the spectrum scarcity problem faced by them.
Recently, Lin et al. \cite{IEEEhowto: Lin1} proposed an LTE/GSM SR system for an Orthogonal Frequency Division Multiple Access (OFDMA) system to refarm the GSM band by utilizing the subbands that are not occupied by the GSM system.

\begin{figure}
    \centering
    \includegraphics[width=7.5cm,height=6.3cm]{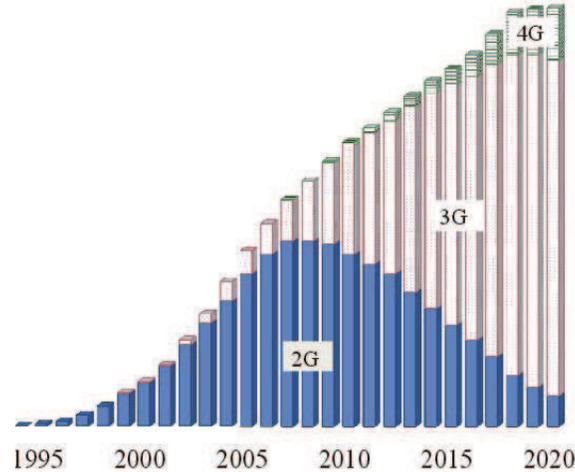}
    \caption{Subscriptions evolution as cellular network technology evolves (Source: CDG, ZTE and \cite{IEEEhowto: Pysavy}).} \label{fig.1}
\end{figure}

Conventionally, spectrum sharing is implemented in two ways \cite{IEEEhowto: Liang}: overlay spectrum sharing, which allows the secondary users opportunistically access the unused spectrum of the legacy (primary) users, and underlay spectrum sharing, which allows the secondary and primary users co-transmit at the same band. Similarly, there are two types of SR models: overlay SR model and underlay SR model. The LTE/GSM SR system \cite{IEEEhowto: Lin1} operating in GSM band belongs to an overlay SR model.

In this paper, a new SR model is proposed for an OFDMA system to refarm the spectrum assigned to a Code Division Multiple Access (CDMA) system through intelligently exploiting the interference margin provided by the CDMA system. Due to the wide band nature of the CDMA and OFDMA systems, the OFDMA/CDMA SR system will operate in an underlay manner, thus the system belongs to the underlay SR model.

The operating principle of OFDMA/CDMA SR system is as follows. Let us consider a single-cell direct sequence CDMA uplink, in which the CDMA users are assigned with random spreading codes. When the receiver is applied, such as match filter (MF) or linear minimum mean-square-error (MMSE) receiver, there exists an inter-user interference, which is related to the CDMA system load defined as the ratio of total number of CDMA users to the spreading gain \cite{IEEEhowto: Verdu}. When the processing gain and number of users are both large, the signal-to-interference-plus-noise-ratio (SINR) of the receiver output will converge to a limiting SINR, which is independent of the specific spreading codes, thanks to the random matrix theory. Thus, for a given receive power, the maximum CDMA system load depends on the target SINR only. In another word, when the CDMA system is operating at a lower load, the CDMA users will experience less interference. Thus, there exists an interference margin that can be tolerated by CDMA users, with which the same target SINR for CDMA users can still be maintained. This interference margin in fact defines the maximum interference power that can be introduced by the OFDMA system to the CDMA system for the OFDMA/CDMA SR system.

To implement a successful OFDMA/CDMA SR system, we identify and address several key challenges as follows.

First and foremost, the interference margin provided by the CDMA system has to be predicted. In conventional spectrum sharing studies, the primary and secondary systems usually operate with the same access scheme, and the interference margin is given as a predefined threshold \cite{IEEEhowto: Ghasemi}-\cite{IEEEhowto: Kang}. In our SR system, there exist different access schemes, thus we have to carefully quantify the interference margin based on the SINR analysis of CDMA system. Although there are extensive investigations on the CDMA SINR in single/multi-cell \cite{IEEEhowto: Tse}, \cite{IEEEhowto: Viswanath}, flat/frequency-selective fading \cite{IEEEhowto: Verdu}, \cite{IEEEhowto: Miller} scenarios, few effort has been put on the SINR performance when the CDMA users are interfered by a system with a different access scheme.
Furthermore, when evaluating the SINR of CDMA users, it is usually required to know the specific spreading codes adopted by CDMA users and the channel state information (CSI) between CDMA users and the base station (BS).
This could be impractical for the SR system, due to limited information exchange between the two systems. Motivated by \cite{IEEEhowto: Verdu}, \cite{IEEEhowto: Tse} that the CDMA SINR converges in probability to a deterministic value when the system dimensions get large, we resort to the asymptotic analysis to predict the interference margin for our SR system. Thus, several questions need to be answered.
How can the asymptotic SINR of CDMA user accommodate the interference from the OFDMA system, since the interference here is colored in spectrum? When the CDMA system adopts different receive filters, such as MF and MMSE filter, is the effect of OFDMA interference identical? What will be the impact of the CDMA system to the OFDMA system?
Answering these questions in this work will reveal the essence of the OFDMA/CDMA SR system.

Second, once the CDMA interference margin is obtained, this margin should be effectively used by the OFDMA system to perform resource allocation. Again, in conventional spectrum sharing studies, to keep the interference to primary receiver below the interference margin, the CSI of the channel from secondary transmitter to primary receiver, i.e., cross-channel, has to be known \cite{IEEEhowto: Ghasemi}-\cite{IEEEhowto: Kang}, \cite{IEEEhowto: Almalfouh}, \cite{IEEEhowto: Khoshkaholgh1} or partially known \cite{IEEEhowto: Soltani}. In practice, however, the cross-channel CSI is usually difficult to be obtained due to the non-cooperation between primary and secondary systems.


To solve the above problems, we first establish the signal model of the proposed OFDMA/CDMA SR system. In our model, both OFDMA and CDMA system share the same cell site and same BS antenna. This sharing could be possible as both CDMA and OFDMA networks belong to the same operator. Thus, the CSI of cross-channel is known, as it equals to the CSI of secondary signal channel itself. This is an interesting and fundamental model for the proposed SR system, from which various studies could be made. We then quantify the asymptotic SINR performance of the CDMA and OFDMA users in the proposed SR system using the random matrix theory and the large number law, from which the interference margin tolerable by CDMA system is derived.
When evaluating the SINR of OFDMA users, the interference from CDMA system is also taken into consideration.
Finally, by using the interference margin together with the transmit power constraints, the resource allocation problem of the OFDMA system is formulated and solved through dual decomposition method.

It is pointed out that in \cite{IEEEhowto: Almalfouh}, \cite{IEEEhowto: Khoshkaholgh1}, and \cite{IEEEhowto: Soltani}, where the spectrum is shared by the CDMA and OFDMA systems, the interference margin was a predefined value, and there is no justification on how to determine this value. Furthermore, the interference from primary transmitter to secondary receiver was not taken into consideration.


The remainder of the paper is organized as follows.
In Section II, the proposed OFDMA/CDMA SR system, and its signal transmission model are presented.
In Section III, the signal detection for CDMA and OFDMA users is investigated.
The asymptotic SINR of CDMA users with different CDMA receive filters are analyzed, and the interference margin that can be exploited by OFDMA system is determined.
After that, the resource allocation problem of OFDMA system is formulated and solved in Section IV.
Simulation results that validate our analysis is presented in Section V.
Finally, the paper is concluded in Section VI.

The notations used in this paper are as follows.
The boldface upper case letters denote matrices, and the boldface lower case letters denote vectors.
$\mathbb E[\cdot]$ denotes expectation.
We use superscripts $(\cdot)^H$ and $(\cdot)^T$ to denote the conjugate transpose and transpose of a matrix or a vector.
$\bm I$ and $\rm{diag(\cdot)}$ denote the identity matrix and diagonal matrix, respectively.
$\rm{Tr\{\cdot\}}$, $|\cdot|$, and $\|\cdot\|$ denote the trace of a matrix, absolute value, and Euclidean distance, respectively.

\section{System Model}

The uplink of OFDMA/CDMA SR system is considered in this paper.
The signal transmission model from user terminals to the BS is illustrated in Fig. \ref{fig.2}.
Here, we assume that single antenna is deployed at the BS and all user terminals.
Furthermore, both systems share the same cell site and same BS antenna.
This sharing is reasonable, because the operator can adopt SR technique by adding OFDMA transceiver to the CDMA cell cite.
The signal received by the common antenna can be passed to the CDMA module and the OFDMA module for their respective signal detection. It is pointed out that the results obtained in this paper can also be extended to the case that different antennas are used at the BSs of the two systems.

\subsection{SR System Description}
\begin{figure}
    \centering
    \includegraphics[height=4.8cm]{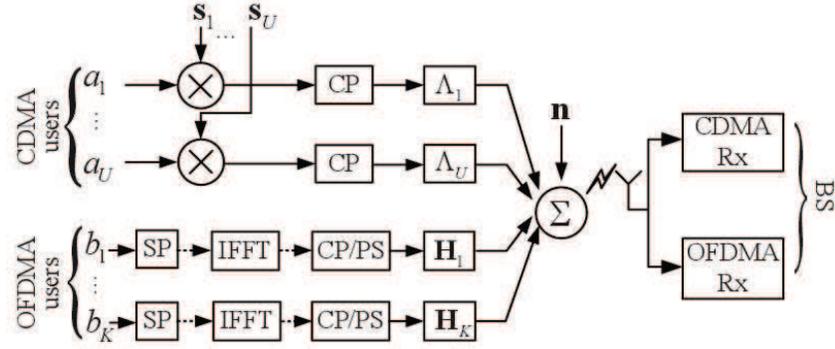}
    \caption{The system model of uplink OFDMA/CDMA SR system.} \label{fig.2}
\end{figure}

Denote $W$ as the effective bandwidth licensed to the CDMA system operation, $U$ the total number of CDMA users.
Each CDMA user is assigned with a random spreading code with spreading gain $N$.
Thus the chip duration and symbol duration for the CDMA system are denoted as
$T_c=\frac{1}{W}$, and $T_s=\frac{N}{W}$, respectively.
The OFDMA system operates in the same spectrum as the CDMA system.
Let $K$ be the total number of OFDMA users.
Suppose the FFT size is $N$, and the OFDMA symbol duration is chosen to be equal to the CDMA symbol duration $T_s$.
The cyclic-prefix (CP) is used to take care of the multipath effect of the wireless channel.
The overall bandwidth $W$ is thus split into $N$ equally-spaced orthogonal subcarriers.

Take the wideband CDMA uplink as an example.
In practice, it operates at a 5MHz bandwidth with the chip rate of 3.84Mcps.
The spreading gain can vary from 2 to 256 \cite{IEEEhowto: 3GPP}.
The LTE can adopt 256 subcarriers when working at 5MHz mode with subcarrier spacing of 15kHz, The sampling rate is thus $15\rm{KHz}\times 256 = 3.84\rm{MHz}$ that equals to the WCDMA chip rate.
Thus, the two systems can easily get synchronized with the same clock reference.

All the wireless links are quantified by distance-dependent path loss, large-scale shadowing, and small-scale fading.
For simplicity, the distance-dependent path loss and large-scale shadowing are treated to be 1, and the average power of small-scale fading is also normalized to be 1.
With this normalization, the transmission power in fact defines the average receive power at the BS. Furthermore, the small-scale fading is modeled with $L$ equally-spaced multipaths, and the time delay between two consecutive paths is $T_{c}$. Moreover, the channels are assumed to be invariant within each symbol duration. Next, we will look into the details of signal transmissions in the SR system.

\subsection{CDMA Signal}

As shown in Fig. \ref{fig.2}, let $a_u$ be the data symbol to be transmitted by CDMA user $u$.
The symbol $a_{u}$ is spread by spreading code ${\bm{s}}_u$, where ${{\bm{s}}_u} = {\left[ {{s_{u,1}},{s_{u,2}},\dots,{s_{u,N}}} \right]^T}$ with unit power, i.e.,  ${{\mathbb E} [\| {{{\bm{s}}_u}} \|^2]} = 1$, and covariance of ${\mathbb E}\left[ {{{\bm{s}}_u}{\bm{s}}_u^H} \right] = \frac{1}{N}{\bm{I}}$.
The spreading codes ${\bm s}_1,\dots, {\bm s}_{U}$ for different users are chosen to be independent with each other.
With perfect power control \cite{IEEEhowto: Viterbi}, we assume that all users have equal average receive power $q$, i.e., ${\mathbb E}[|a_1|^2] = \cdots = {\mathbb E}[|a_U|^2] =q$.

For ease of presentation, we assume the CDMA system also has CP, similar to the OFDMA system, though the results derived in the paper can be readily extended to normal CDMA systems\footnote[1]{The effect of ISI in normal CDMA systems without CP insertion can be negligible, when the number of the interfered chips is small compared to the spreading gain. This is true especially when $N\to \infty$.  In the normal CDMA system with limited spreading gain, the ISI can be largely mitigated by post-processing, such as the overlap-save scheme \cite{IEEEhowto: Mahmoud}.}.
For every $N$ chips coming from the same symbol, a CP containing $G$ ($G \ge L-1$) chips are inserted before the chip signals are transmitted over the wireless channel.
The insertion of CP at transmitter and removal of CP at receiver avoid the inter-symbol-interference (ISI) between successive symbols caused by multipath effect.

The time-domain CDMA received signal at the BS after CP removing is given by
\begin{align}  \label{2.1}
	{{\bm{r}}_c} = \sum\limits_{u=1}^U {{{\bm{C}}_u}{{\bm{s}}_u}{a_u}},
\end{align}
where ${{\bm{r}}_c}$ is $N \times 1 $ vector, ${\bm{C}}_u$ is $N \times N$ matrix
\begin{align} \label{2.2}
	{{\bm{C}}_u} = \left( {\begin{array}{*{20}{c}}
{{h_{u,L}}}\\0\\ \vdots \\0
\end{array}\begin{array}{*{20}{c}}
{{h_{u,L - 1}}}\\{{h_{u,L}}}\\ \vdots \\0
\end{array}\begin{array}{*{20}{c}}
 \cdots \\ \cdots \\ \ddots \\ \cdots
\end{array}\begin{array}{*{20}{c}}
{{h_{u,1}}}\\0\\ \vdots \\{{h_{u,L}}}
\end{array}\begin{array}{*{20}{c}}
 \cdots \\ \cdots \\ \ddots \\
 \cdots
\end{array}\begin{array}{*{20}{c}}
0\\0\\ \vdots \\
{{h_{u,1}}}
\end{array}} \right),
\end{align}
whose entry $h_{u,l}$ is the impulse response of the $l$th path of user $u$. Because ${\bm{C}}_u$ is circulant, it can be factorized as
\begin{align} \label{2.3}
    {{\bm{C}}_u} = {{\bm{W}}^H}{{\bm{\Lambda}} _{u}}{\bm{W}},
\end{align}
where ${{\bm{W}}}$ is the $N$-point Fourier transform matrix whose entries are $\frac{1}{{\sqrt N }}{( {{e^{-\frac{{2\pi i}}{N}ab}}})_{a,b = 0,\dots,N-1}}$, and ${\bm{\Lambda}}_{u}={\rm{diag}}(\lambda_{u,1},\dots,\lambda_{u,N})$ contains the frequency-domain channel response of user $u$.
Substituting \eqref{2.3} in \eqref{2.1} yields
\begin{align} \label{2.4}
    {{\bm{r}}_c} = \sum\limits_{u=1}^U {{{\bm{W}}^H}{{\bm{\Lambda}} _{u}}{\bm{W}}{{\bm{s}}_u}{a_u}}.
\end{align}

\subsection{OFDMA Signal}

According to Fig. \ref{fig.2}, the $N$-dimensional OFDMA received signal can be written as 
\begin{align} \label{2.5}
	{{\bm{r}}_o} = \sum\limits_{k=1}^K {{{\bm{W}}^H}{{\bm{H}}_{k}}{{\bm{b}}_k}} = {{\bm{W}}^H}{{\bm{H}}}{\bm{b}}.
\end{align}
Here, $ {{{\bm{b}}_k}}  = {[ {{b_{k,1}},{b_{k,2}},\dots,{b_{k,N}}} ]^T}$ is the transmission data vector for each user $k=1,\dots,K$, and ${\mathbb E}[ {{{| {{b_{k,n}}}|}^2}} ] = {p_{k,n}}$ is the transmission power of user $k$ on subcarrier $n$.
${{\bm{H}}_k} = {\rm{diag}}\left( {{H_{k,1}},{H_{k,2}},\dots,{H_{k,N}}} \right)$ is frequency-domain channel response matrix for user $k$. Moreover, $ {\bm{b}} = \sum\nolimits_{k=1}^K {{{\bm{b}}_k}} $ and
${\bm{H}} = {\rm{diag}} \left( {{H_{k[1],1}},{H_{k[2],2}},\dots,{H_{k[N],N}}} \right)$, where ${\left\{ {k[n]} \right\}_{n=1,\dots,N}}$ are the user index, indicating the user that is transmitting over subcarrier $n$.

\subsection{Compound Received Signal at BS}

Assume that both systems share the same receive antenna at the BS, and the OFDMA symbols are synchronously received at the BS with the CDMA symbols\footnote[2]{The OFDMA system can adopt a timing advance mechanism that is widely used in 2G/3G cellular networks \cite{IEEEhowto: Zhang}. As OFDMA transmits, it first synchronizes the frame with CDMA system. Then they will keep synchronous as they have the same symbol duration, and the use of CP can tolerate certain amount timing mismatch.}. The receive signal ${\bm{r}}$ is thus composed by ${\bm{r}}_c$, ${\bm{r}}_o$ and the additive white Gaussian noise ${\bm{n}}$, where ${\bm{n}} \sim {\cal{CN}}\left( {0,{\sigma^2}{\bm{I}}} \right)$, which is represented as
\begin{align} \label{2.6}
	{\bm{r}} = \sum\limits_{u=1}^U {{{\bm{W}}^H}{{\bm{\Lambda}} _{u}}{\bm{W}}{{\bm{s}}_u}{a_u}}  + {{\bm{W}}^H}{{\bm{H}}}{\bm{b}} + {\bm{n}}.
\end{align}
The corresponding frequency-domain compound signal is
\begin{align} \label{2.7}
	\tilde {\bm{r}} = \sum\limits_{u =1}^U {{{\bm{\Lambda}} _{u}}{{\tilde {\bm{s}}}_u}{a_u}}  + {{\bm{H}} }{\bm{b}} + \tilde {\bm{n}},
\end{align}
where $\tilde {\bm{r}}={\bf{W}} {\bm{r}}$, $\tilde {\bm{s}}_u = {\bm W}{\bm{s}}_u$, and $\tilde {\bm{n}} = {\bm{W}} {\bm{n}}$.

It can be seen from \eqref{2.7} that when the OFDMA and CDMA uplink transmissions co-occur in the system, they will interfere with each other at the BS. In the following sections, we will describe the respective signal detections for the two systems, and quantify their mutual interference, which will then be used to design the resource allocation scheme for the OFDMA system.

\section{Disjoint Detection of SR System}

Without cooperation between the two systems, disjoint detection is applied to CDMA receiver and OFDMA receiver which abstract their desired data by treating the remained part as interference plus noise.
In practice, linear receivers are adopted by CDMA system due to their complexity advantage.
Denoting ${{{\bm{e}}_u}}$ as the MF receiver and ${{{\bm{d}}_u}}$ as the MMSE receiver of CDMA user $u, \forall u=1,\dots,U$, in the following subsections, we will derive the SINR expressions for the CDMA and OFDMA users when disjoint detection is applied.

\subsection{CDMA SINR with MF Receiver}

Taking the CDMA user $u$ as an example, the MF is designed as ${\bm{e}_u} = {{\bm{\Lambda}} _{u}}{\tilde {\bm{s}}_u}$, and the MF output, ${{\hat a}_u}^{({\rm{MF}})} = {{\bm{e}}_u^H}\tilde {\bm{r}} $, can be expressed as
\begin{align} \label{3.1}
	{{\hat a}_u}^{({\rm{MF}})} = {\bm{e}}_u^H{{\bm{e}}_u}{a_u} + {\bm{e}}_u^H\left( {\sum\limits_{i=1,i\not= u}^U {{{\bm{\Lambda}} _i}} {{\tilde {\bm{s}}}_i}{a_i} + {\bm{Hb}} + \tilde {\bm{n}}} \right).
\end{align}
The first term in \eqref{3.1} is the desired signal, and the remainder is interference plus noise. The SINR can be calculated as
\begin{align} \label{3.2}
	\gamma _u^{a({\rm{MF}})} = \frac{{{q}{{\left| {{\bm{e}}_u^H{{\bm{e}}_u}} \right|}^2}}}{{{\bm{e}}_u^H\left( {\sum\nolimits_{i =1, i\not= u}^U {{q}{{\bm{e}}_i}{\bm{e}}_i^H}  + {\bm{\Sigma}}  + {\sigma ^2}{\bm{I}}} \right){{\bm{e}}_u}}}.
\end{align}
The matrix ${\bm{\Sigma}}$ is the $N\times N$ covariance matrix of $\bm{Hb}$ with $n$th diagonal entry being $\sigma_n^2$.
By denoting $g_{k,n} = |H_{k,n}|^2$ as the channel gain of OFDMA user $k$ on subcarrier $n$, $\sigma_n^2$ can be expressed as $\sigma_n^2 = \sum_{k=1}^K p_{k,n}g_{k,n}$, which represents the interference power seen by CDMA on subcarrier $n$.

Eq. \eqref{3.2} shows that the SINR of a CDMA user depends on the specific spreading codes as well as the CSI of all CDMA users in the system.
With the spectrum sharing between the CDMA and OFDMA system, the SINR of CDMA user is further affected by the transmission power and CSI of OFDMA system.
Without knowing these information, it is difficult to exactly calculate the SINR of the CDMA users.
Fortunately, it has been proven that when $\bm{\Sigma} = \bm 0$, \eqref{3.2} converges in probability to a deterministic value when the dimensions, $(N, U)$, of the CDMA system become large, which means the asymptotic SINR of pure CDMA system is independent of the specific spreading codes.
This inspires us to investigate the asymptotic SINR of CDMA users when OFDMA system co-exists.

{\bf{AS1}}: We consider a large CDMA system, in which $N \to \infty , U \to \infty,$ but $\frac{U}{N}$ converges to a constant parameter $\alpha$, which represents the CDMA system load.

\begin{corollary} \label{corollary: MF-SINR}
With {\bf{AS1}}, the SINR of CDMA user $u$, $\gamma^{a\rm{(MF)}}_u$, in the SR system converges in probability to
\begin{align} \label{3.3}
	\tilde \gamma _u^{a({\rm{MF}})} =
	\frac{{{q}{{\left( {\frac{1}{N}\sum\nolimits_{n=1}^N {{{\left| {{\lambda _{u,n}}} \right|}^2}} } \right)}^2}}}{{\frac{1}{{{N^2}}}\sum\nolimits_{n=1}^N {\left( {{{\left| {{\lambda _{u,n}}} \right|}^2}\sum\nolimits_{i=1,i\not=u}^U {{q}{{\left| {{\lambda _{i,n}}} \right|}^2}} } \right)}  + \frac{1}{N}\sum\nolimits_{n=1}^N {( {{{\left| {{\lambda _{u,n}}} \right|}^2}\sigma _n^2} )}  + \frac{1}{N}\sum\nolimits_{n=1}^N {{{\left| {{\lambda _{u,n}}} \right|}^2}} {\sigma ^2}}},
\end{align}
where $\sigma _n^2 = \sum\nolimits_{k=1}^K {{p_{k,n}}{g_{k,n}}}, \forall n=1,\dots,N $.
\end{corollary}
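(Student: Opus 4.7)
The plan is to invoke standard trace-lemma results from random matrix theory on each of the four random quadratic forms appearing in \eqref{3.2}, and then combine them via Slutsky's theorem. Writing $\bm{e}_u=\bm{\Lambda}_u\tilde{\bm{s}}_u$ with $\tilde{\bm{s}}_u=\bm{W}\bm{s}_u$, the denominator splits naturally into three pieces: the multi-user CDMA interference $q\sum_{i\neq u}|\bm{e}_u^H\bm{e}_i|^2$, the OFDMA-induced interference $\bm{e}_u^H\bm{\Sigma}\bm{e}_u$, and the thermal noise $\sigma^2\bm{e}_u^H\bm{e}_u$. The two tools I will use are (i) the trace lemma: if $\bm{A}$ is deterministic (or independent of $\bm{s}_u$) with bounded spectral norm, then $\bm{s}_u^H\bm{A}\bm{s}_u-\tfrac{1}{N}{\rm Tr}(\bm{A})\xrightarrow{p}0$ under \textbf{AS1}; and (ii) its bilinear companion: for independent i.i.d.\ vectors $\bm{s}_u,\bm{s}_i$ and deterministic $\bm{B}$, $|\bm{s}_u^H\bm{B}\bm{s}_i|^2-\tfrac{1}{N^2}{\rm Tr}(\bm{B}^H\bm{B})\xrightarrow{p}0$. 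Since $\bm{W}$ is unitary, both results carry over verbatim to $\tilde{\bm{s}}_u,\tilde{\bm{s}}_i$ with the traces unchanged.

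Applying (i) with $\bm{A}=\bm{\Lambda}_u^H\bm{\Lambda}_u$ gives $\bm{e}_u^H\bm{e}_u\xrightarrow{p}\tfrac{1}{N}\sum_{n=1}^N|\lambda_{u,n}|^2$, whence the continuous mapping theorem yields the numerator of \eqref{3.3} after squaring and multiplying by $q$. The same lemma with $\bm{A}=\bm{\Lambda}_u^H\bm{\Sigma}\bm{\Lambda}_u={\rm diag}(|\lambda_{u,n}|^2\sigma_n^2)$ gives the OFDMA term $\bm{e}_u^H\bm{\Sigma}\bm{e}_u\xrightarrow{p}\tfrac{1}{N}\sum_n|\lambda_{u,n}|^2\sigma_n^2$, and reusing the first calculation gives the noise contribution $\sigma^2\cdot\tfrac{1}{N}\sum_n|\lambda_{u,n}|^2$. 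These three reductions match the second, third, and fourth terms of the denominator of \eqref{3.3} exactly.

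For the multi-user CDMA interference, tool (ii) applied with $\bm{B}=\bm{\Lambda}_u^H\bm{\Lambda}_i$ gives, for each $i\neq u$, $|\bm{e}_u^H\bm{e}_i|^2-\tfrac{1}{N^2}\sum_n|\lambda_{u,n}|^2|\lambda_{i,n}|^2\xrightarrow{p}0$. The remaining step is to show that summing $U-1\sim\alpha N$ of these per-user limits preserves the convergence, so that $q\sum_{i\neq u}|\bm{e}_u^H\bm{e}_i|^2\xrightarrow{p}\tfrac{q}{N^2}\sum_n|\lambda_{u,n}|^2\sum_{i\neq u}|\lambda_{i,n}|^2$, the remaining term of the denominator.

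The main obstacle is precisely this aggregation: each bilinear summand concentrates around a quantity of order $O(1/N)$, and there are $\Theta(N)$ of them, so the aggregate deviation is not automatically $o(1)$. Resolving this requires a Chebyshev-type bound exploiting the fact that, conditional on the channel matrices $\{\bm{\Lambda}_i\}$, the cross forms $\{\bm{e}_u^H\bm{e}_i\}_{i\neq u}$ share the single random vector $\tilde{\bm{s}}_u$ but are otherwise driven by independent $\tilde{\bm{s}}_i$, which—together with the implicit uniform-boundedness of the channel power profiles $\{|\lambda_{i,n}|^2\}$ needed for the trace lemma in the first place—controls the variance of the sum at the $O(1/N)$ scale. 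Once this is established, a single application of Slutsky's theorem assembles the four convergence statements into \eqref{3.3}.
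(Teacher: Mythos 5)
Your proposal is correct and uses the same asymptotic machinery as the paper (trace-lemma concentration of quadratic forms in the spreading vectors, combined via Slutsky/continuous mapping), but it organizes the denominator differently. The paper treats the whole interference-plus-noise matrix $\bm{M}=\sum_{i\neq u}q\bm{e}_i\bm{e}_i^H+\bm{\Sigma}+\sigma^2\bm{I}$, which is independent of $\tilde{\bm{s}}_u$, as a single matrix and applies the trace lemma once to $\bm{e}_u^H\bm{M}\bm{e}_u$, obtaining $\frac{1}{N}{\rm Tr}\{\bm{\Lambda}_u\bm{\Lambda}_u^H\bm{M}\}$; the residual randomness in the $\tilde{\bm{s}}_i$, $i\neq u$, is then removed by a second, unstated round of the same lemma (the paper's ``which can be further derived to the denominator of \eqref{3.3}''). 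You instead expand the multiuser term as $q\sum_{i\neq u}|\bm{e}_u^H\bm{e}_i|^2$ and work with bilinear forms, which forces you to face the aggregation of $\Theta(N)$ terms of size $O(1/N)$ explicitly. Your diagnosis there is exactly right: your tool (ii) is vacuous per term (each $|\bm{e}_u^H\bm{e}_i|^2$ and its mean vanish individually, and their ratio does not concentrate), and the real content is a variance bound on the sum --- conditional on $\tilde{\bm{s}}_u$ and the channels the summands are independent with conditional variance $O(1/N^2)$, so the sum's conditional variance is $O(1/N)$, while the conditional mean $\frac{q}{N}\bm{e}_u^H\bigl(\sum_{i\neq u}\bm{\Lambda}_i\bm{\Lambda}_i^H\bigr)\bm{e}_u$ is itself a quadratic form in $\tilde{\bm{s}}_u$ with a bounded-norm matrix and concentrates to the first term of the denominator in \eqref{3.3}. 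The paper's route buys brevity (one trace-lemma application, at the price of needing the a.s.\ bounded spectral norm of $\bm{M}$, which it does not verify either); your route buys transparency, since the user-aggregation step --- the only genuinely delicate point, and one the paper glosses over --- is identified and resolved by a concrete Chebyshev argument. Both routes implicitly require uniformly bounded channel gains $|\lambda_{i,n}|^2$, and your unitary-invariance remark correctly transfers the lemmas from $\bm{s}_u$ to $\tilde{\bm{s}}_u=\bm{W}\bm{s}_u$.
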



\begin{proof}
To derive the asymptotic value of $\gamma_u^{a\rm{(MF)}}$, we can derive asymptotic values of the nominator and denominator in \eqref{3.2} separately.
First, the nominator in \eqref{3.2} is given by
\begin{align} \label{c1.1}
    q{\left| {\bm{e}_u^H{\bm{e}_u}} \right|^2} = q{\left| {\tilde{\bm s}_u^H \bm{\Lambda} _u^H{\bm{\Lambda} _u}{\tilde{\bm {s}}_u}} \right|^2} = q{\left| {{\rm{Tr}}\left\{ {\tilde{\bm s}_u^H\bm{\Lambda} _u^H{\bm{\Lambda} _u}{{\tilde{\bm s}}_u}} \right\}} \right|^2}.
\end{align}
As $N\to\infty$, according to \cite{IEEEhowto: Verdu}$, {\rm{Tr}}\left\{ {\tilde{\bm s}_u^H \bm{\Lambda} _u^H{\bm{\Lambda} _u}{{\tilde {\bm s}}_u}} \right\} \mathop  \to \limits^{a.s.} $ $ {\frac{1}{N}{\rm{Tr}} \left\{ {\bm{\Lambda} _u^H{\bm{\Lambda} _u}} \right\}} = \frac{1}{N}\sum\nolimits_{n = 1}^N {{{\left| {{{\lambda} _{u,n}}} \right|}^2}}$, \eqref{c1.1} is thus given by
\begin{align}
    q{\left| {\bm{e}_u^H{\bm{e}_u}} \right|^2} \mathop \to \limits^{a.s.} q{\left( {\frac{1}{N}\sum\nolimits_{n = 1}^N {{{\left| {{\lambda _{u,n}}} \right|}^2}} } \right)^2}.
\end{align}
Similarly, the asymptotic value of the denominator in \eqref{3.2} can be derived as
\begin{align} \label{3.4}
    & {{\bm{e}}_u^H\left( {\sum\nolimits_{i =1, i\not= u}^U {{q}{{\bm{e}}_i}{\bm{e}}_i^H}  + {\bm{\Sigma}}  + {\sigma ^2}{\bm{I}}}  \right){{\bm{e}}_u}} \notag \\
    & \mathop \to \limits^{a.s.}
    \frac{1}{N}{\rm{Tr}} \left\{{{{\bm{\Lambda}} _u} {\bm{\Lambda}} _u^H\left( {\sum\nolimits_{i=1, i\not= u}^U {{q}{{\bm{e}}_i}{\bm{e}}_i^H}  + {\bm{\Sigma}}  + {\sigma ^2}{\bm{I}}} \right)} \right\},
\end{align}
which can be further derived to the denominator of \eqref{3.3}. Then the corollary is proven.
\end{proof}

Corollary 1 offers us the following insights.
Besides the inter-user-interference due to the independence and randomness of the spreading codes used by CDMA users, the OFDMA introduces interference on each subcarrier, acting as colored noise across the overall spectrum due to the non-uniform power allocation among subcarriers and channel fading selectivity, which is whitened at the output of the MF.

Corollary 1 describes the asymptotic SINR for CDMA users with frequency selective fading channels. If the CDMA users are with flat fading channels, i.e.,  ${\bm{\Lambda}}_{u}=\lambda_u {\bm{I}}$, (10) can be simplified as
\begin{align} \label{3.5}
    \tilde \gamma _u^{a({\rm{MF}})} = \frac{{{q}{{\left| {{\lambda _u}} \right|}^2}}}{{\frac{1}{N}\sum\nolimits_{i=1, i\ne u}^U {{q}{{\left| {{\lambda _i}} \right|}^2}}  + \frac{1}{N}\sum\nolimits_{n=1}^N { \sum\nolimits_{k=1}^K {{g_{k,n}}{{p_{k,n}}} } }  + {\sigma ^2}}}.
\end{align}
If the CDMA users are with AWGN channels, i.e., ${\lambda _u}=1$, \eqref{3.5} can be further simplified to
\begin{align} \label{3.6}
    \tilde \gamma _u^{a({\rm{MF}})} = \frac{q}{{\alpha q + \frac{1}{N}\sum\nolimits_{n=1}^N {\sum\nolimits_{k=1}^K g_{k,n} {{p_{k,n}}} }  + {\sigma ^2}}}.
\end{align}

Eq. \eqref{3.3} still cannot be used by OFDMA system to predict the interference margin, since the SINR in \eqref{3.3} depends on the CSI of CDMA users, which is not available to OFDMA system in practice.
Next, we will solve this problem by looking at the case when the CDMA users are with rich multipath components. We make the following additional assumption:

{\bf{AS2}}: The number of multipaths $L$ for CDMA users is large, and the wireless channel follows uniform power delay profile \cite{IEEEhowto: Goldsmith}. That is, each channel tap follows complex Gaussian distribution, $h_{u,l}\sim {\cal{CN}} (0,1/L)$.

\begin{corollary} \label{corollary: MF-SINR}
    With {\bf{AS1}} and {\bf{AS2}}, the asymptotic SINR of all the CDMA users in \eqref{3.3} converges to
    \begin{align}\label{3.7}
        \gamma^{a({\rm{MF}})} = \frac{{{q}}}{{\alpha  q  + \overline{{\sigma _n^2}} + {\sigma ^2}}},
    \end{align}
    where $\overline{{\sigma _n^2}} = \frac{1}{N} \sum_{n=1}^{N} \sum_{k=1}^{K} p_{k,n} g_{k,n}$.
\end{corollary}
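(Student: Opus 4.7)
The plan is to start from the expression for $\tilde\gamma_u^{a(\rm{MF})}$ in \eqref{3.3} and show that, under AS2, each channel-dependent spectral average can be replaced by its ensemble mean, after which the dependence on the particular user $u$ and on the individual channel realizations disappears and the SINR collapses to \eqref{3.7}. The key enabling fact is that under the uniform power delay profile with $L\to\infty$, the frequency-domain gains $|\lambda_{u,n}|^2$ behave, on spectral averages, like the deterministic value $\mathbb{E}[|\lambda_{u,n}|^2] = 1$.

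First I would apply Parseval's identity through the unitary DFT to obtain $\frac{1}{N}\sum_{n=1}^N |\lambda_{u,n}|^2 = \sum_{l=1}^L |h_{u,l}|^2$. Under AS2 each $|h_{u,l}|^2$ is an i.i.d.\ variable with mean $1/L$ and variance $O(1/L^2)$, so the strong law of large numbers gives $\sum_l |h_{u,l}|^2 \to 1$ almost surely. This immediately yields the numerator of \eqref{3.3}: $q\bigl(\frac{1}{N}\sum_n |\lambda_{u,n}|^2\bigr)^2 \to q$.

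Next, I would rewrite the denominator of \eqref{3.3} in the factored form
\[
\frac{1}{N}\sum_{n=1}^N |\lambda_{u,n}|^2 \Bigl(\frac{q}{N}\sum_{i\neq u} |\lambda_{i,n}|^2 + \sigma_n^2 + \sigma^2\Bigr)
\]
and treat the three inner terms separately. For the inter-user term, the inner bracket $\frac{q}{N}\sum_{i\neq u}|\lambda_{i,n}|^2$ is, for each fixed $n$, an average of $U-1$ independent unit-mean variables, so under AS1 it concentrates uniformly in $n$ at $\alpha q$; combined with the previous step, this contributes $\alpha q$. The noise term similarly contributes $\sigma^2$. For the OFDMA interference term $\frac{1}{N}\sum_n |\lambda_{u,n}|^2 \sigma_n^2$, I would use the decomposition
\[
|\lambda_{u,n}|^2 = \sum_{l=1}^L |h_{u,l}|^2 + \sum_{l\neq l'} h_{u,l}h_{u,l'}^* e^{-j2\pi(l-l')(n-1)/N},
\]
whose diagonal part tends to $1$ by the same LLN; when averaged over $n$ against any bounded weight $\sigma_n^2$, the off-diagonal phases produce an inner DFT-type sum $\sum_n e^{-j2\pi(l-l')(n-1)/N}$ that vanishes for $l \neq l'$, so $\frac{1}{N}\sum_n |\lambda_{u,n}|^2 \sigma_n^2 \to \frac{1}{N}\sum_n \sigma_n^2 = \overline{\sigma_n^2}$. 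Collecting the three limits gives the denominator $\alpha q + \overline{\sigma_n^2} + \sigma^2$, which together with the numerator produces \eqref{3.7}.

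The main obstacle is the last step: justifying that the weighted spectral average of $|\lambda_{u,n}|^2$ against the OFDMA interference profile $\sigma_n^2$ factors into the product of the two separate averages. This is precisely where AS2 is indispensable, because only with rich multipath do the off-diagonal phase fluctuations of the CDMA frequency response average out across the $N$ subcarriers, thereby rendering the limiting SINR independent of both the CDMA CSI and the specific OFDMA power-channel pairing---only the mean interference $\overline{\sigma_n^2}$ survives. A rigorous second-moment bound on the off-diagonal fluctuation, together with boundedness of the OFDMA power allocation, would complete the argument.
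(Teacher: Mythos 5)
Your overall route is the same as the paper's: use Parseval to write $\frac{1}{N}\sum_n|\lambda_{u,n}|^2=\sum_l|h_{u,l}|^2$, invoke the law of large numbers under AS1--AS2 to replace each spectral average in \eqref{3.3} by its ensemble mean (numerator $\to q$, inter-user term $\to\alpha q$, OFDMA term $\to\overline{\sigma_n^2}$, noise $\to\sigma^2$), and rely on the independence of the CDMA channel from the OFDMA power profile. The paper's Appendix A does exactly this, and in fact more tersely: it simply asserts $\frac{1}{N}\sum_n|\lambda_{u,n}|^2\sigma_n^2\to\mathbb{E}[\sigma_n^2]$ by independence, whereas you attempt to justify that step explicitly with the diagonal/off-diagonal decomposition of $|\lambda_{u,n}|^2$; conditioning on the interference profile $\{\sigma_n^2\}$ as a fixed bounded weight is, if anything, the more careful treatment.

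However, the justification you give for the off-diagonal part is wrong as stated. Because the weight $\sigma_n^2$ sits inside the sum over $n$, the inner sum is not $\sum_n e^{-j2\pi(l-l')(n-1)/N}$ but $\sum_n\sigma_n^2\,e^{-j2\pi(l-l')(n-1)/N}$, i.e.\ the $(l-l')$-th DFT coefficient of the interference profile, call it $N\hat\Sigma(l-l')$. This does not vanish for $l\neq l'$ unless $\sigma_n^2$ is flat across subcarriers, which is exactly the colored-interference case the corollary is meant to cover; so a deterministic phase-cancellation argument fails. What actually kills the off-diagonal contribution is the randomness of the taps: each cross term $h_{u,l}h_{u,l'}^*$ has zero mean, the cross terms are uncorrelated for circularly symmetric Gaussian taps with variance $1/L$, and Parseval applied to $\{\sigma_n^2\}$ bounds the variance of $\sum_{l\neq l'}h_{u,l}h_{u,l'}^*\hat\Sigma(l-l')$ by $O\bigl((\max_n\sigma_n^2)^2/L\bigr)$, which vanishes under AS2. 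This is precisely the ``second-moment bound on the off-diagonal fluctuation'' you defer to in your closing sentence, so the argument closes once that estimate replaces the phase-cancellation claim; without it, the step as written does not go through.
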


\begin{proof}
The proof is given in Appendix A.
\end{proof}
The limiting SINR for CDMA users shown in \eqref{3.7} manifests the essence of the underlay SR system.
For pure CDMA system $(K = 0)$, we can quantify the supportable CDMA load as follows.
Let $q/\sigma^{2}$ and $\beta^\ast$ be the receive SNR and target SINR for CDMA users, both are the parameters related to system design and target Quality-of-Services (QoS).
From the equation, $\frac{q}{\alpha^\ast_{\rm{MF}} q+ \sigma^{2}} = \beta^\ast$, we can obtain the supportable CDMA load as
\begin{align} \label{3.8}
\alpha^\ast_{\rm{MF}} = \frac{1}{\beta^\ast} - \frac{1}{q/\sigma^2}.
\end{align}
When the CDMA load decreases from $\alpha^\ast_{\rm{MF}}$ to $\alpha$, to maintain the same target SINR $\beta^\ast$, there is interference margin tolerable by CDMA system, which we denote as $T_{{\rm{MF}}}$. Based on the equation $\frac{q}{\alpha q+T_{{\rm{MF}}} + \sigma^2} = \beta^\ast$, $T_{{\rm{MF}}}$ can be derived as
\begin{align} \label{3.9}
T_{{\rm{MF}}} = (\alpha^\ast_{\rm{MF}} - \alpha) q,
\end{align}
which is the maximal interference level that can be introduced by OFDMA system.

\subsection{CDMA SINR with MMSE Receiver}

When the CDMA system adoptes MMSE receiver, the receiver for user $u$, ${\bm{d}}_u$, can be denoted as \cite{IEEEhowto: Goldsmith}
\begin{align} \label{3.14}
	{{\bm{d}}_u} = {\left( {\sum\nolimits_{u=1}^{U} {q {\bm{e}}_i {{{\bm{e}}_i^H}}}  + {\bm{\Sigma}} + {\sigma ^2}{\bm{I}}} \right)^{ - 1}} {\bm{e}}_u {q},
\end{align}
where ${\bm{e}}_i = {\bm{\Lambda}}_{i} {\tilde{{\bm{s}}}}_i$.
Then, the output of the receiver is $\hat a_u^{({\rm{MMSE}})} = {\bm{d}}_u^H\tilde {\bm{r}}$, whose SINR can be evaluated as
\begin{align} \label{3.15}
    \gamma _u^{a({\rm{MMSE}})} = q{\bm{e}}_u^H{\left( {\sum\nolimits_{i = 1}^U {q{{\bm{e}}_i}{\bm{e}}_i^H}  + {\bm{\Sigma}}  + {\sigma ^2}{\bm{I}}} \right)^{ - 1}}{{\bm{e}}_u}.
\end{align}

Similar to MF, the SINR of MMSE receiver cannot be evaluated without knowing the specific spreading codes and CSI of all users.
It has been proven that the asymptotic CDMA SINR with MMSE receiver without OFDMA users converges to a deterministic value, under the help of random matrix theory.
To derive the asymptotic CDMA SINR in the OFDMA/CDMA SR system, we first recall the following theorem, whose derivation considers the interference among CDMA users only.
\begin{theorem}[Theorem 6.10 of \cite{IEEEhowto: Couillet}]
    With {\bf{AS1}}, the SINR of user $u$ without OFDMA users converges to $x_u(-\sigma^2)$ in probability, where $x_u(z)$, $z\in \mathbb{C} \backslash \mathbb{R}^+$ is the unique Stieltjes transform that satisfies\footnote[3]{The proof is provided as a special case in \cite{IEEEhowto: Wagner}.}
\begin{align}
    {x_u(z)} = \frac{1}{N}\sum\limits_{n = 1}^N {\frac{{q{{\left| {{\lambda _{u,n}}} \right|}^2}}}{{\frac{1}{N}\sum\nolimits_{i = 1}^U {\frac{q}{{1 + {x_i(z)}}}{{\left| {{\lambda _{i,n}}} \right|}^2}}  - {z}}}}.
\end{align}
\end{theorem}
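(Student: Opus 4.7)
The plan is to combine Sherman--Morrison/resolvent identities with the standard trace lemma for random quadratic forms, and to identify the resulting deterministic equivalents as the coupled Stieltjes-transform system in the statement.

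First, I set $\bm{R} = \sum_{i=1}^U q\bm{e}_i\bm{e}_i^H + \sigma^2\bm{I}$ and $\bm{R}_u = \bm{R} - q\bm{e}_u\bm{e}_u^H$. A single Sherman--Morrison step absorbs the self-term so that the MMSE output SINR equals $q\bm{e}_u^H \bm{R}_u^{-1}\bm{e}_u$, in which $\tilde{\bm{s}}_u$ is now independent of the relevant matrix. Since $\tilde{\bm{s}}_u$ has i.i.d.\ entries with variance $1/N$, the Bai--Silverstein trace lemma applied to the quadratic form $\tilde{\bm{s}}_u^H \bm{\Lambda}_u^H\bm{R}_u^{-1}\bm{\Lambda}_u\tilde{\bm{s}}_u$ gives
\[
q\bm{e}_u^H\bm{R}_u^{-1}\bm{e}_u - \frac{q}{N}\mathrm{Tr}\!\bigl(\bm{\Lambda}_u\bm{\Lambda}_u^H\bm{R}_u^{-1}\bigr) \xrightarrow{a.s.} 0,
\]
and a rank-one perturbation bound lets me replace $\bm{R}_u^{-1}$ by $\bm{R}^{-1}$ inside the trace at the cost of an $O(1/N)$ term. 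The problem is thereby reduced to producing a deterministic equivalent for $\tfrac{1}{N}\mathrm{Tr}\!\bigl(\bm{\Lambda}_u\bm{\Lambda}_u^H\bm{R}^{-1}\bigr)$.

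Second, I would construct that deterministic equivalent by iterating the same procedure over the remaining users. Peeling each rank-one term $q\bm{e}_i\bm{e}_i^H$ off $\bm{R}^{-1}$ via Sherman--Morrison and re-applying the trace lemma introduces, for each column $i$, a shrinkage factor $1/(1+x_i(z))$, where $x_i(z)$ is the asymptotic value of $q\bm{e}_i^H(\bm{R}-q\bm{e}_i\bm{e}_i^H-z\bm{I})^{-1}\bm{e}_i$. Since each $\bm{\Lambda}_i$ is diagonal, this yields the self-consistent approximation
\[
[\bm{R}^{-1}]_{n,n} \approx \Bigl[\tfrac{1}{N}\sum_{i=1}^U \tfrac{q|\lambda_{i,n}|^2}{1+x_i(z)} - z\Bigr]^{-1},
\]
evaluated at $z=-\sigma^2$. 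Weighting by $q|\lambda_{u,n}|^2/N$ and summing over subcarriers reproduces exactly the fixed-point equation claimed for $x_u(z)$.

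Finally, I would show that the coupled system admits a unique Stieltjes-type solution on $\mathbb{C}\setminus\mathbb{R}^+$. For real $z<0$ this follows from Yates' standard-interference-function framework; the complex case is then handled by an analytic continuation argument inside the Nevanlinna class. The hard part, and what occupies most of the Wagner and Couillet--Debbah proofs that the theorem invokes, is the simultaneous control of the $O(N^{-1/2})$ fluctuations produced when Sherman--Morrison peeling is iterated across all $U=\alpha N$ users, so that the aggregate approximation error vanishes in the large-system limit of \textbf{AS1} rather than accumulating to an $O(1)$ bias.
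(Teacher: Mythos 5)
The paper offers no proof of this statement at all: it is quoted as Theorem 6.10 of Couillet--Debbah, with the footnote deferring the actual derivation to Wagner et al., so there is nothing in the paper itself to compare line by line. Your sketch --- matrix-inversion-lemma removal of the self-term so the SINR becomes $q\bm{e}_u^H\bm{R}_u^{-1}\bm{e}_u$, the Bai--Silverstein trace lemma for the quadratic form in $\tilde{\bm{s}}_u$, rank-one perturbation to pass to the full resolvent, the deterministic equivalent with the $1/(1+x_i(z))$ shrinkage factors, and uniqueness via a standard-interference-function argument at $z=-\sigma^2$ plus continuation in the Stieltjes class --- is precisely the route taken in those cited references, and the parts you flag as the real work (uniform control of the error over the $U=\alpha N$ peeling steps and uniqueness of the fixed point) are exactly what occupies their proofs, so your proposal is essentially the same approach as the paper's (outsourced) one.
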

Considering the interference introduced by OFDMA transmission, the asymptotic SINR of CDMA users is given in the following corollary.
\begin{corollary} \label{corollary: MMSE-SINR}
With {\bf{AS1}}, the SINR of user $u$, $ \gamma _u^{a({\rm{MMSE}})}$, in the SR system converges in probability to $x_u(-\sigma^2)$, where $x_u(z)$, $z\in \mathbb{C} \backslash \mathbb{R}^+$ is the unique Stieltjes transform that satisfies
\begin{align} \label{3.16}
	 x_u(z)  = \frac{1}{N}\sum\limits_{n=1}^N {\frac{{{q}{{\left| {{\lambda _{u,n}}} \right|}^2}}}{{\frac{1}{N}\sum\nolimits_{i=1}^U {\frac{{{q}}}{{1 + {x_i(z)}}}{{\left| {{\lambda _{i,n}}} \right|}^2}}  + \sigma _n^2 - {z}}}}.
\end{align}
\end{corollary}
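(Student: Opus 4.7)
The plan is to reduce Corollary 3 to the pure-CDMA result of Theorem 2 by absorbing the OFDMA contribution into an effective per-subcarrier noise, and then repeat the deterministic-equivalent derivation with this generalized noise in place of $\sigma^{2}\bm{I}$. Concretely, since $\bm{\Sigma}=\mathrm{diag}(\sigma_{1}^{2},\dots,\sigma_{N}^{2})$ with $\sigma_{n}^{2}=\sum_{k}p_{k,n}g_{k,n}$ is diagonal in the frequency domain and, from the perspective of the CDMA randomness (spreading codes $\bm{s}_{i}$ and multipath taps $h_{u,l}$), acts as a bounded deterministic matrix, I would set $\bm{D}:=\bm{\Sigma}+\sigma^{2}\bm{I}$ and rewrite \eqref{3.15} as $\gamma_{u}^{a(\mathrm{MMSE})}=q\bm{e}_{u}^{H}(\bm{M}+\bm{D})^{-1}\bm{e}_{u}$ with $\bm{M}=\sum_{i=1}^{U}q\bm{e}_{i}\bm{e}_{i}^{H}$.

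Next I would apply the matrix inversion lemma to isolate user $u$: letting $\bm{M}_{[u]}=\sum_{i\neq u}q\bm{e}_{i}\bm{e}_{i}^{H}$, the SINR becomes $q\bm{e}_{u}^{H}(\bm{M}_{[u]}+\bm{D})^{-1}\bm{e}_{u}$. Writing $\bm{e}_{u}=\bm{\Lambda}_{u}\tilde{\bm{s}}_{u}$ with $\tilde{\bm{s}}_{u}$ isotropic and independent of $\bm{M}_{[u]}$, the standard trace lemma (the same device invoked for the MF case in Corollary 1) gives, almost surely under \textbf{AS1},
\begin{align*}
\gamma_{u}^{a(\mathrm{MMSE})}-\frac{q}{N}\sum_{n=1}^{N}|\lambda_{u,n}|^{2}\bigl[(\bm{M}_{[u]}+\bm{D})^{-1}\bigr]_{nn}\ \longrightarrow\ 0,
\end{align*}
so the task reduces to producing a deterministic equivalent for each diagonal entry of $(\bm{M}_{[u]}+\bm{D})^{-1}$. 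This is precisely the object handled by the generalized deterministic equivalent of Wagner (cited in the footnote of Theorem 2), or equivalently by an adaptation of Theorem 6.10 of \cite{IEEEhowto: Couillet} in which $-z\bm{I}$ is replaced by the bounded diagonal matrix $\bm{D}-z\bm{I}$. Tracking $\bm{D}$ through the fixed-point derivation simply adds the $n$-th diagonal entry of $\bm{D}$ to the scalar $-z$ appearing in the denominator of Theorem 2, i.e.\ $\sigma_{n}^{2}-z$; at $z=-\sigma^{2}$ this recovers the total non-CDMA interference $\sigma_{n}^{2}+\sigma^{2}$ on subcarrier $n$. Substituting this back into the trace expression yields the fixed-point system \eqref{3.16}, and uniqueness of the Stieltjes transform $x_{u}(z)$ on $\mathbb{C}\setminus\mathbb{R}^{+}$ follows from the same contraction/monotonicity argument used to establish Theorem 2.

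The step I expect to require the most care is the bookkeeping of the perturbation $\bm{\Sigma}$: one must verify that the OFDMA powers $p_{k,n}$ and channel gains $g_{k,n}$ are either deterministic or independent of the CDMA spreading codes and multipath taps, and that the resulting diagonal entries $\sigma_{n}^{2}$ are uniformly bounded in $n$ and $N$, so that the standard random-matrix tools (trace lemma, rank-one perturbation lemma, resolvent identity) continue to apply. Under the signal model of Section II this boundedness is natural, but it should be stated explicitly before invoking the deterministic equivalent. Once this is in place, no new random-matrix machinery is required: Corollary 3 is a direct diagonal-perturbation adaptation of the proof underlying Theorem 2, with $-z\bm{I}$ replaced by $\bm{\Sigma}-z\bm{I}$ at every stage.
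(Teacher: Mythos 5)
Your proposal follows essentially the same route as the paper's proof: replace the quadratic form by a normalized trace via the trace lemma, then invoke the Wagner/Couillet--Debbah deterministic equivalent with the OFDMA interference absorbed as the deterministic diagonal loading $\bm{\Sigma}-z\bm{I}$, so that the diagonal structure of $\bm{\Lambda}_u\bm{\Lambda}_u^H$ and $\bm{\Sigma}$ immediately yields the fixed-point system \eqref{3.16} and the conclusion $\gamma_u^{a(\mathrm{MMSE})}\to x_u(-\sigma^2)$. The only difference is cosmetic: you first strip out user $u$ via the matrix inversion lemma before applying the trace lemma, whereas the paper applies the asymptotic equivalence directly to the full-sum resolvent (implicitly using the rank-one perturbation argument), so the two arguments are substantively identical.
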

\begin{proof}
As $N, U \to \infty$,
\begin{align} \label{c3.1}
    \gamma _u^{a({\rm{MMSE}})}
     \mathop \to \limits^{a.s.}
     \frac{q}{N}{\rm{Tr}} \Bigg\{ {\bm{\Lambda}}_u{{\bm{\Lambda}} ^H_u} {\Bigg( { \sum\limits_{i = 1}^U q {{{\bm{e}}_i}{\bm{e}}_i^H}  + {\bm{\Sigma}}  + {\sigma ^2}{\bm{I}}} \Bigg)^{ - 1}} \Bigg\}.
\end{align}
As ${\bm{\Lambda}}_u{{\bm{\Lambda}}_u^H}$ is Hermitian matrix with uniformly bounded spectrum norm, and ${\bm{\Sigma}} = {\rm{diag}}(\sigma_1^2,\dots, \sigma_N^2)$, the trace in \eqref{c3.1} can be further derived as
\begin{align} \label{c3.2}
    & \frac{q}{N}{\rm{Tr}} \Bigg\{ {\bm{\Lambda}}_u {{\bm{\Lambda}} ^H_u} {\Bigg( { \sum\limits_{i = 1}^U q {{{\bm{e}}_i}{\bm{e}}_i^H}  + {\bm{\Sigma}}  - {z}{\bm{I}}} \Bigg)^{ - 1}} \Bigg\} \notag \\
    & \mathop \to \limits^{a.s.}
     \frac{q}{N}{\rm{Tr}} \Bigg\{ {{\bm{\Lambda}} _u}{\bm{\Lambda}} _u^H{\Bigg( {\sum\limits_{i = 1}^U {\frac{q}{{1 + {x_i(z)}}}{{\bm{\Lambda}} _i} {\bm{\Lambda}} _i^H}  + {\bm{\Sigma}}  - {z}I} \Bigg)^{ - 1}} \Bigg\},
\end{align}
where $x_i(z)$ is the unique functional solution of
\begin{align}
    x_i(z) = \frac{q}{N}{\rm{Tr}} \Bigg\{ {{\bm{\Lambda}} _u}{\bm{\Lambda}} _u^H{\Bigg( {\sum\limits_{j = 1}^U {\frac{q}{{1 + {x_j(z)}}}{{\bm{\Lambda}} _i} {\bm{\Lambda}} _i^H}  + {\bm{\Sigma}}  - {z}{\bm{I}}} \Bigg)^{ - 1}} \Bigg\},
\end{align}
such that all $\{x_i(z)\}_{i=1,\dots,U}$ are Stieltjes transforms of non-negtive finite measure on $\mathbb R^+$.
Since ${\bm{\Lambda \Lambda}}^H$ and $\bm{\Sigma}$ are diagonal matrices, the trace in \eqref{c3.2} can be readily derived as
\begin{align}\label{c3.3}
	 \sum\limits_{n = 1}^N {\frac{{{{\left| {{\lambda _{u,n}}} \right|}^2}}}{{\frac{1}{N}\sum\nolimits_{i = 1}^U {\frac{q}{{1 + {x_i(z)}}}{{\left| {{\lambda _{i,n}}} \right|}^2}}  + \sigma _n^2 - {z}}}}.
\end{align}
Thus we can conclude that $\gamma^{a\rm{(MMSE)}}_u $ converges to $x_u(-\sigma^2)$ in probability, where $x_u(z)$ is the unique Stieltjes transform that satisfies \eqref{c3.3}.
Then, the corollary is proven.
\end{proof}

The asymptotic SINR of CDMA users with MMSE receiver under colored noise was also investigated in \cite{IEEEhowto: Viswanath}, where the asymptotic SINR of CDMA users depends on the distribution of the covariance of noise on each dimension.
The distribution is pre-assumed in the context of that work, since the colored noise modeled the interference from neighboring cells with CDMA transmission, which can be estimated but cannot be controlled.
In this work, however, the colored noise reflects the interference introduced by the OFDMA transmission, whose transmission strategy, i.e. resource allocation, will be controlled by the OFDMA system.

Similar to the MF scenario, the result obtained under selective-fading channel shown in \eqref{3.16} can be extended to flat-fading and AWGN channel, respectively. Under the flat-fading channel condition, \eqref{3.16} can be simplified as
\begin{align} \label{3.17}
    {x_u(z)} = \frac{1}{N}\sum\limits_{n=1}^N {\frac{{{q}{{\left| {{\lambda _u}} \right|}^2}}}{{\frac{1}{N}\sum\nolimits_{i=1}^U {\frac{{{q}}}{{1 + {x_i(z)}}}{{\left| {{\lambda _i}} \right|}^2}}  + \sum\nolimits_{k=1}^K {{p_{k,n}}{g_{k,n}}}  + {\sigma ^2}}}},
\end{align}
while under the AWGN channel condition,  \eqref{3.17} can be further simplified as
\begin{align} \label{3.19}
    {x_u(z)} = \frac{1}{N}\sum\limits_{n=1}^N {\frac{{{q}}}{{\frac{1}{N}\sum\nolimits_{i=1}^U {\frac{{{q}}}{{1 + {x_i(z)}}}}  + \sum\nolimits_{k=1}^K {{p_{k,n}}g_{k,n}}  + {\sigma ^2}}}}.
\end{align}

Solving \eqref{3.16} to find the SINR for each user involves solving $U$ coupled non-linear equations.  Moreover, it also needs to know the explicit CSI of all the CDMA users. To overcome these difficulties, we provide the following corollary.

\begin{corollary}
With {\bf{AS1}} and {\bf{AS2}}, the SINR of all the CDMA users with MMSE receivers converges to $\gamma^{a(\rm{MMSE})}$, which is the unique solution of
\begin{align}\label{3.20}
	{x} = {{\mathbb E}_{n}}\left[ {\frac{{{q}}}{{{\frac{{{\alpha q}}}{{1 + {x}}}}  + \sigma _n^2 + {\sigma ^2}}}} \right],
\end{align}
where ${\mathbb E}_{n}[\cdot]$ denotes taking the arithmetic mean on $\{\sigma_n^2\}_{n=1,\dots,N}$.
\end{corollary}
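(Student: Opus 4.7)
The plan is to start from Corollary 3, which already identifies the limiting SINR of user $u$ as $x_u(-\sigma^2)$ where $x_u(z)$ solves the coupled system \eqref{3.16}, and then to show that under AS2 this system collapses to the single fixed-point equation \eqref{3.20}. Two reductions drive the collapse: the per-user Stieltjes transforms $x_u$ coalesce to a common limit $x$, and the random frequency-domain gains $|\lambda_{u,n}|^2$ can be averaged out while the deterministic OFDMA interference profile $\{\sigma_n^2\}$ is retained.

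First I would argue that, by the statistical symmetry of the users (identical receive power $q$ and identical channel-tap distributions under AS2) together with the uniqueness clause of Corollary 3, $x_u(z) \to x(z)$ for every $u$ in probability. Substituting this common $x$ into the denominator of \eqref{3.16} gives $\frac{q}{1+x(z)}\cdot\frac{1}{N}\sum_{i=1}^U |\lambda_{i,n}|^2$. For each fixed $n$, the quantities $\{|\lambda_{i,n}|^2\}_{i=1}^U$ are i.i.d.\ across $i$ with mean $1$ (since $\lambda_{i,n}$ is a unit-variance complex Gaussian under AS2), and the law of large numbers, combined with $U/N \to \alpha$, yields $\frac{1}{N}\sum_i |\lambda_{i,n}|^2 \to \alpha$ almost surely. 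The denominator therefore simplifies to $\frac{\alpha q}{1+x(z)} + \sigma_n^2 - z$.

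Next I would handle the outer average over $n$. After the previous step we have $x(z) = \frac{1}{N}\sum_n \frac{q\,|\lambda_{u,n}|^2}{\frac{\alpha q}{1+x(z)} + \sigma_n^2 - z}$. Under AS2 the channel has many independent taps drawn from a uniform power-delay profile, so the tone-domain gains $\{|\lambda_{u,n}|^2\}_{n=1}^N$ become asymptotically uncorrelated with common mean $1$. Since $\{\sigma_n^2\}$ is controlled by the OFDMA side and is deterministic with respect to the CDMA channels, a conditional law of large numbers replaces $|\lambda_{u,n}|^2$ by its mean inside the weighted average, giving $x(z) = \mathbb{E}_n\!\left[\frac{q}{\frac{\alpha q}{1+x(z)} + \sigma_n^2 - z}\right]$. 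Setting $z = -\sigma^2$ yields \eqref{3.20}. Uniqueness follows because the right-hand side, viewed as a function of $x$, is continuous and strictly increasing on $[0,\infty)$ with a positive value at $x=0$ and a finite limit as $x\to\infty$, so it intersects the diagonal $y = x$ at exactly one point.

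The main obstacle is the joint concentration required above: one must simultaneously average $|\lambda_{i,n}|^2$ over users $i$ in the denominator and over tones $n$ in the numerator, all while keeping the coupled fixed-point system under control. The over-$i$ step is a routine LLN, but the over-$n$ step depends essentially on AS2, since for bounded $L$ the tone-domain gains remain correlated and no such averaging is legitimate. This is exactly the bottleneck resolved in Appendix A for the MF case, so I expect to re-use the same technique, propagating it through the Stieltjes-transform fixed point rather than through a direct SINR expression.
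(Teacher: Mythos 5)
Your route is essentially the paper's own (Appendix B): substitute $z=-\sigma^2$ into the fixed-point system of Corollary 3, use the law of large numbers over the users $i$ in the denominator and over the tones $n$ in the numerator (the latter being exactly where AS2 is needed, as in the MF proof of Appendix A), and use user symmetry to collapse all $x_u$ to a common $x$; the paper merely performs the averaging first and reads off the uniformity afterwards, whereas you invoke symmetry first, which is an immaterial reordering.

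The one step I would not accept as written is the uniqueness argument. A continuous, strictly \emph{increasing}, bounded map $g$ with $g(0)>0$ need not meet the diagonal exactly once: nothing prevents $g'$ from exceeding $1$ (indeed here $g'$ can approach $1/\alpha$ for small noise), so after the first crossing the curve can re-cross $y=x$. The correct one-line fix is the paper's Proposition 1 device: rewrite the fixed-point equation as
\begin{align}
\frac{g(x)}{x}={\mathbb E}_{n}\left[\frac{q}{\frac{\alpha q x}{1+x}+\left(\sigma_n^2+\sigma^2\right)x}\right]=1,
\end{align}
and observe that the denominator inside the expectation is strictly increasing in $x$, so $g(x)/x$ decreases strictly from $+\infty$ to $0$ and equals $1$ at exactly one point. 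With that substitution your proposal matches the paper's proof.
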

\begin{proof}
The proof is given in Appendix B.
\end{proof}
 Similar to the MF case, without the OFDMA sharing ($K=0$), the supportable CDMA load when MMSE is adopted can be derived as
\begin{align}
	\alpha^\ast_{\rm{MMSE}} = \left( \frac{1}{\beta^\ast} - \frac{1}{q/\sigma^2} \right) \left( 1+\beta^\ast \right).
\end{align}
Similar to the MF scenario, when $\alpha^\ast_{\rm{MMSE}}$ decreases to $\alpha$, there is interference margin provided by the CDMA system.
Let $t^{{\rm{MMSE}}}_n$ denote the interference margin on subcarrier $n$, based on equation ${\beta^\ast} = {{\mathbb E}_{n}}[ {\frac{{{q}}}{{{\frac{{{\alpha q}}}{{1 + \beta^\ast}}}  + t^{{\rm{MMSE}}}_n + {\sigma ^2}}}} ]$, we can see that $\{t^{{\rm{MMSE}}}_n\}$ can be different across subcarriers.
Therefore, it is difficult to derive the specific interference margin for all subcarriers.
In the following section, we will solve this problem by reinforcing the protection to the CDMA users, by which the average interference margin, $T_{{\rm{MMSE}}}$, is needed to be derived only.

\subsection{OFDMA SINR}

At the OFDMA receiver, the output of the FFT processor can be shown in \eqref{2.7}.
The second component is the desired signal, while the first and last components are interference plus noise.
Then, the SINR of OFDMA user $k, \forall k=1,\dots,K$, on subcarrier $n, \forall n=1,\dots,N$, becomes
\begin{align} \label{3.22}
   \gamma _{k,n}^b = \frac{{{p_{k,n}}{g_{k,n}}}}{{\frac{1}{N}\sum\nolimits_{u=1}^U {{q}{{\left| {{\lambda _{u,n}}} \right|}^2}}  + {\sigma ^2}}}.
\end{align}
Under {\bf{AS1}} and {\bf{AS2}}, the SINR in \eqref{3.22} converges to
\begin{align} \label{3.23}
	\gamma _{k,n}^b = \frac{{{p_{k,n}}{g_{k,n}}}}{{\alpha q + {\sigma ^2}}}.
\end{align}
Based on OFDMA definition, we make
\begin{align} \label{3.24}
{p_{k,n}}\left\{ \begin{array}{l}
 > 0,{\rm{ if ~subcarrier~}}n~{\rm{is ~allocated ~to ~user ~}}k\\
 = 0,{\rm{ otherwise}}
\end{array} \right.
\end{align}
to guarantee the subcarriers being exclusively assigned among users. With \eqref{3.23}, the overall uplink throughput achieved by OFDMA system can be written as
\begin{align} \label{3.25}
C = \sum\limits_{k=1}^K {\sum\limits_{n=1}^{N} {{{\log }_2}\left( {1 + \gamma _{k,n}^b} \right)} }.
\end{align}

In next section, we will formulate and solve the OFDMA resource allocation problem with the protection to the CDMA users.

\section{Optimal OFDMA Resource Allocation in SR System}

\subsection{CDMA Protection}

With the receive SNR $q/\sigma^2$ and system load $\alpha$ of CDMA system that are available, the OFDMA system can predict the interference margin based on the SINR analysis in the previous section, where $q/\sigma^2$ and $\alpha$ are both system parameter.
Meanwhile, no information about the OFDMA system is required by the CDMA system.
Thus, this SR system requires least information exchange, which largely alleviates the signalling burden.

To protect the CDMA services, we have to guarantee its SINR no less than the target SINR, $\beta^\ast$, i.e.,
\begin{align}\label{4.1,2}
	\gamma ^{a({\rm{MF}})} \ge {\beta^\ast} ~~ {\rm{or}} ~~ \gamma^{a({\rm{MMSE}})} \ge {\beta^\ast}.
\end{align}
Since the interference margin when CDMA adopts MF receiver has been given by \eqref{3.9}, to meet $\gamma ^{a({\rm{MF}})} \ge {\beta^\ast}$, we can restrict the total interference introduced by OFDMA system according to
\begin{align} \label{4.3}
	\overline{{\sigma _n^2}} \le T_{\rm{MF}}.
\end{align}

To meet the requirement for the CDMA system with MMSE receiver is not straightforward, since $ \gamma ^{a({\rm{MMSE}})} $ is self-contained and dependent on the value of $\{\sigma_n^2\}_{n=1,\dots,N}$ that is to be determined by the OFDMA resource allocation. To avoid solving \eqref{3.20} directly, the following proposition is provided.

\emph{Proposition} 1: Suppose \eqref{3.20}
has the unique solution $\gamma$. For any given $\beta^\ast$, $\gamma \ge \beta^\ast$ if and only if
\begin{align} \label{4.6}
	{{\mathbb{E}}_{n}}\left[ {\frac{q}{{ \frac{\alpha q}{{1 + \beta^\ast }} + \sigma _n^2 + {\sigma ^2}}}} \right] \ge \beta^\ast.
\end{align}

\emph{Proof}: Letting $f\left( x \right) = {{\mathbb{E}}_{n}} \left[ {\frac{{q/x}}{{ \frac{\alpha q}{{1 + x}} + \sigma _n^2 + {\sigma ^2}}}} \right]$, which is a continuous and strictly decreasing function. Since $\gamma$ is the unique solution of \eqref{3.20}, $f(\gamma)=1$, we have $\gamma \ge \beta^\ast$ is equivalent to $f(\beta^\ast)\ge 1$ according to the monotonicity of $f(x)$.

\subsection{Problem Formulation}
Based on above analysis, the OFDMA resource allocation problem in the SR system can be formulated as (P.1).
\begin{align} \label{4.4}
\left( {{\rm{P}}.{\rm{1}}} \right) ~& {\max _{{\bm{P}}}}\sum\limits_{k = 1}^K {\sum\limits_{n = 1}^{{N}} {{{\log }_2}\Big( {1 + \frac{{{p_{k,n}}{g_{k,n}}}}{{\alpha q + {\sigma ^2}}}} \Big)} }    \notag \\
& s.t.\left\{ {\begin{array}{*{20}{l}}
\eqref{4.3} {~\rm{or}} ~\eqref{4.6} \\
{\sum\nolimits_{n = 1}^{N} {{p_{k,n}}}  \le {{\bar P}_k},\forall k=1,\dots,K}
\end{array}} \right.
\end{align}
with $p_{k,n}$ being defined as in \eqref{3.24}, and $\bm{P}$ is the $K\times N$ power allocation matrix with entries of $\{p_{k,n}\}_{k=1,...,K, n=1,...,N}$. $\bar{P}_k$ is the maximum transmission power for each user $k$.

\subsection{OFDMA Resource Allocation with CDMA MF Receiver}

When CDMA adopts MF receiver, (P.1) should be solved with first constraint being \eqref{4.3}.
Apparently, (P.1) is not a convex problem on $\bm{P}$, which makes the problem difficult to solve; and the exhaustive search becomes prohibitively complex as the numbers of subcarriers and users getting large.
It has been shown that the duality gap vanishes, as the number of subcarriers becomes large, by using dual decomposition method \cite{IEEEhowto: Cioffi}, which is also adopt in the uplink resource allocation \cite{IEEEhowto: Huang}.
Therefore, as a benchmark performance, we also adopt this method in solving (P.1). The details of dual decomposition optimization for solving (P.1) is provided in Appendix C.
During solving the problem, $K+1$ dual variables are involved, where $\delta$ is the one associated with the interference constraint and $\lambda_k, k=1,...,K$, associated with power constraints for each OFDMA user. Subgradient method is adopted to update the dual variables.

\subsection{OFDMA Resource Allocation with CDMA MMSE Receiver}

When CDMA adopts MMSE receiver, we should solve (P.1) with the first constraint being \eqref{4.6}.
However, the problem cannot be decoupled completely w.r.t. OFDMA users and subcarriers; thus the dual decomposition method cannot be applied.
To make the problem tractable, we will reinforce this constraint.

Since the interior function of the expectation in \eqref{4.6} is concave, we have
\begin{align} \label{4.7}
   {\mathbb E}_n \left[{\frac{q}{{\frac{{\alpha q}}{{1 + \beta }} + \sigma _n^2 + {\sigma ^2}}}} \right] \ge \frac{q}{{\frac{{\alpha q}}{{1 + \beta }} + \overline {\sigma _n^2}  + {\sigma ^2}}}.
\end{align}
If we make that
\begin{align} \label{4.8}
    \frac{q}{{\frac{{\alpha q}}{{1 + \beta }} + \overline {\sigma _n^2}  + {\sigma ^2}}} \ge \beta^\ast,
\end{align}
the constraint \eqref{4.6} can be guaranteed. By solving the equation $\frac{q}{{\frac{{\alpha q}}{{1 + \beta }} + {T_{{\rm{MMSE}}}} + {\sigma ^2}}} = {\beta ^ * }$, we can get the interference margin provided by the CDMA system with MMSE receiver as
\begin{align} \label{4.10}
T_{\rm{MMSE}} = \frac{\left( \alpha^\ast_{\rm{MMSE}} - \alpha \right)q}{1+\beta^\ast},
\end{align}
where $\alpha^\ast_{\rm{MMSE}}$ is given by (30).
Thus, to guarantee \eqref{4.8}, we can restrict the interference power introduced by OFDMA system according to
\begin{align} \label{4.9}
    \overline {\sigma _n^2} \le T_{\rm{MMSE}}.
\end{align}
Now, the problem can be solved by the same algorithm as that of MF case, by updating the interference margin to $T_{\rm{MMSE}}$.

\subsection{Remarks}

Besides solving the problem by typical numerical method, we would like to investigate some interesting structures of the solution for (P.1).
First, we consider a situation that the power limitation is dominant for this problem, for either fairly small ${\bar P}_k$ or large interference margin due to light CDMA load. Then interference constraint can be ignored.
\begin{align}
    \left( {{\rm{P}}.{\rm{2}}} \right) ~& {\max _{\bm{P}}}\sum\limits_{k=1}^{K} {\sum\limits_{n=1}^{N} {{{\log }_2}\Big( {1 + \frac{{{p_{k,n}}{g_{k,n}}}}{{\alpha q + {\sigma ^2}}}} \Big)} }    \notag \\
    & s.t. \sum\nolimits_{n = 1}^N {{p_{k,n}}}  \le {\bar P _k}, \forall k=1,\dots,K
\end{align}
The optimal power allocation is
\begin{align}
    {p_{k,n}} = {\left[ {\frac{1}{{{\lambda _k}\ln 2}} - \frac{{\alpha q + {\sigma ^2}}}{{{g_{k,n}}}}} \right]^ + },
\end{align}
which is a typical water-filling solution.

As a counterpart, when the interference constraint is stringent, due to a heavy CDMA load or high power supply of the OFDMA user, the problem becomes
\begin{align}
    \left( {{\rm{P}}.{\rm{3}}} \right) ~& {\max _{\bm{P}}}\sum\limits_{k=1}^{K} {\sum\limits_{n=1}^{N} {{{\log }_2}\Big( {1 + \frac{{{p_{k,n}}{g_{k,n}}}}{{\alpha q + {\sigma ^2}}}} \Big)} }    \notag \\
    & s.t. \sum\nolimits_{n=1}^N \sum\nolimits_{k=1}^{K} {{p_{k,n}}{g_{k,g}}}   \le T
\end{align}
where $T$ represents either $T_{\rm{MF}}$ or $T_{\rm{MMSE}}$. Associating dual variable $\delta$ with the constraint, the optimal $p_{k,n}$ is
\begin{align}
    {p_{k,n}} = \frac{1}{{{g_{k,n}}}}{\left[ {\frac{1}{{\delta \ln 2}} - \left( {\alpha q + {\sigma ^2}} \right)} \right]^ + }.
\end{align}
The above equation shows that OFDMA allocates power in a channel-inverse manner. In this case, the OFDMA achievable throughput has the closed form expressed as
\begin{align}
	{C^\prime} = N{\log _2}\Big( {1 + \frac{T}{{\alpha q + {\sigma ^2}}}} \Big).
\end{align}

\section{Performance Evaluation}

In this section, simulation results are provided to evaluate the performance of the proposed OFDMA/CDMA SR system.
The spreading gain of CDMA system and the FFT size of the OFDMA system are set to be $N = 256$, which is large enough to verify the asymptotic results obtained in this paper.
The power of white Gaussian noises is normalized to $\sigma^2=1$. The number of multipaths $L$ is set as $N/8$, and each of the time-domain delay taps follows the distribution of ${\cal{CN}} (0,1/L)$. The target SINR threshold $\beta^\ast$ for CDMA system is set to be 2dB. The designed receive SNR $q/\sigma^2$ and CDMA load $\alpha$ will be varied in the simulations. There are two users in the OFDMA system, and the maximal transmission SNR for each OFDMA user $\bar{P}_k/\sigma^2$ is assumed to be 30dB.

\begin{figure}
    \centering
    \includegraphics[height=6cm]{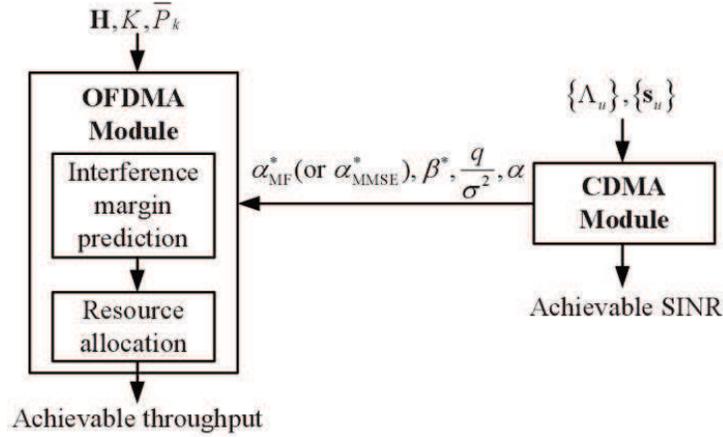}
    \caption{Block diagram of the simulation and information flows.} \label{fig.3}
\end{figure}

The block diagram of the simulation and information flows is illustrated in Fig. \ref{fig.3}. The CDMA system passes its system parameters, $(\alpha^{*}_{\rm{MF}} $ or $\alpha^{*}_{\rm{MMSE}} $, $\beta^{*}, q/\sigma^2, \alpha)$, to the OFDMA system for predicting the interference margin. This is the  only inter-system information flow needed in the SR system.
In order to get the simulated SINRs for CDMA users, the spreading codes for each user are independently and randomly generated, and the interference power introduced by OFDMA system is passed back to the CDMA system. In practice, however, this information flow is not necessary since the interference power can be estimated by the CDMA receiver directly \cite{IEEEhowto: Kansal}, \cite{IEEEhowto: Honig}.

Firstly, we adopt MF as the CDMA receiver to study the OFDMA resource allocation results. For one particular channel realization shown in Fig.\ref{fig.4}~(a) , Fig. \ref{fig.4}~(b) and (c) illustrate the OFDMA power allocation at each subcarrier for light CDMA load scenario and heavy CDMA load scenario, respectively.
It can be observed that for both scenarios, each subcarrier is allocated to the user with the better channel gain.
Moreover, in the light CDMA load scenario, the OFDMA power is allocated in a water-filling way as shown in Fig.~\ref{fig.4}~(b), while in the heavy CDMA load scenario, the OFDMA power is allocated in channel-inverse manner as shown in Fig.~\ref{fig.4}~(c).
These observations are consistent with the discussions in Part E of Section IV.

Fig. \ref{fig.5} and Fig. \ref{fig.8} further validate the convergence of the proposed OFDMA resource allocation algorithm. Fig. \ref{fig.5} (a)-(c) represent light CDMA load case, while Fig. \ref{fig.5} (d)-(f) represent heavy CDMA load case.
The duality gap for both scenarios converges to zero as shown in Fig. \ref{fig.5} (a) and (d).
In Fig. \ref{fig.5} (b) and (c), the converged dual variables $\{\lambda_k\}_{k=1,2}$ are around 0.08, and $\delta$ is close to zero.
This means that the user transmits at its maximum allowable power, and interference to CDMA is less than the interference margin, which are demonstrated in Fig. \ref{fig.8} (b) and (c).
In the heavy load counterpart, the converged $\{\lambda_k\}_{k=1,2}$ approch zeros, while the converged $\delta$ is around 0.02.
This demonstrates that the total interference introduced by the OFDMA system to the CDMA system reaches the interference margin, while the transmission power is less the the maximal value, which can be seen in Fig. \ref{fig.8} (e) and (f).
All these validate the effectiveness of the proposed OFDMA resource allocation algorithm.

\begin{figure}
    \centering
    \includegraphics[height=7cm]{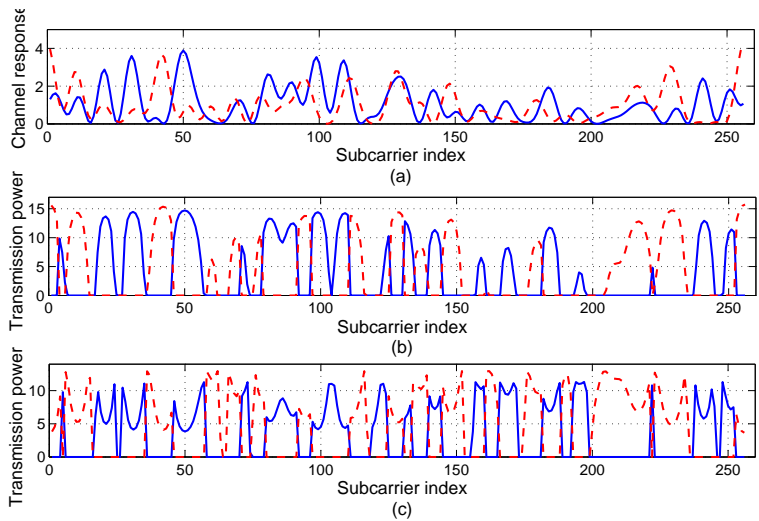}
    \caption{OFDMA resource allocation results: (a) Channel responses of OFDMA users; (b) Subcarrier and power allocation with light CDMA load; (c) Subcarrier and power allocation with heavy CDMA load. The solid line and dashed line represent OFDMA user 1 and OFDMA user 2, respectively.} \label{fig.4}
\end{figure}

\begin{figure}
    \centering
    \includegraphics[height=7cm]{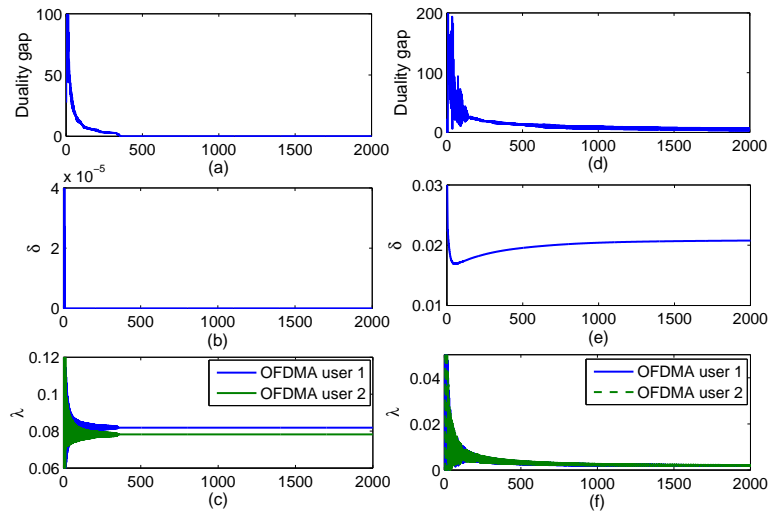}
    \caption{Evolution of duality gap and dual variables. $\delta$ is associated with interference constraint, while $\lambda$ is associated with maximal transmission power constraint. (a)-(c) for light CDMA load; (d)-(f) for heavy CDMA load.} \label{fig.5}
\end{figure}

\begin{figure}
    \centering
    \includegraphics[height=7cm]{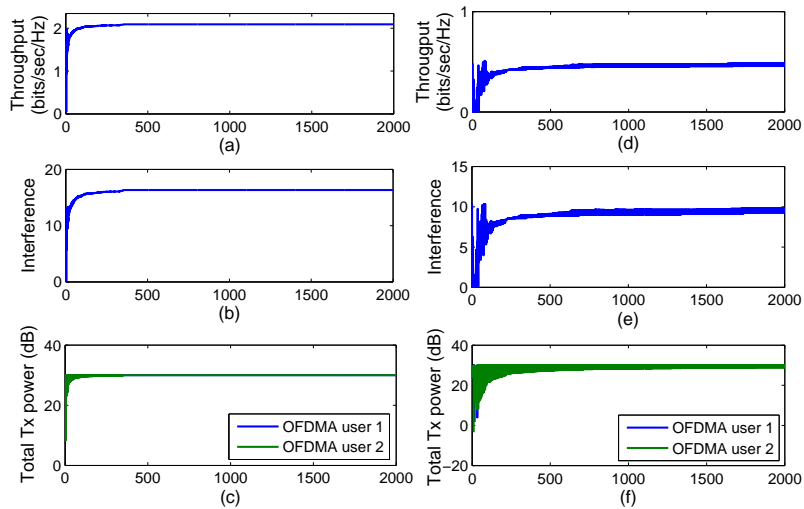}
    \caption{Evolution of OFDMA throughput, interference power to CDMA, and total OFDMA transmission power by each user: (a)-(c) for light CDMA load; (d)-(f) for heavy CDMA load.} \label{fig.8}
\end{figure}

Next, we evaluate the OFDMA achievable throughput by varying the CDMA load, when CDMA adopts different receivers, i.e., MF and MMSE receiver.
From Fig. \ref{fig.10}, we find that OFDMA can achieve much higher throughput when the CDMA system adopts MMSE receiver, as compared with that of MF receiver.
This is because the CDMA system with MMSE receiver can provide higher interference margin than that with MF receiver, which can be seen from \eqref{4.4} and \eqref{4.10}.
Moreover, the feasible range of CDMA load with MMSE receiver is much larger than that with MF receiver.
We also consider two levels of CDMA receive SNR, ${q/}{\sigma^2}=20$dB and ${q/}{\sigma^2}=10$dB to represent high and low CDMA receive power.
When the CDMA load is light, the OFDMA system can achieve high throughput. But when the CDMA load is relative high for both receivers, the 10dB-curve outperforms the 20dB-curve.
This is because the high receive SNR of the CDMA system provides higher interference margin, but also imposes higher interference to OFDMA users.

\begin{figure}
    \centering
    \includegraphics[height=7cm]{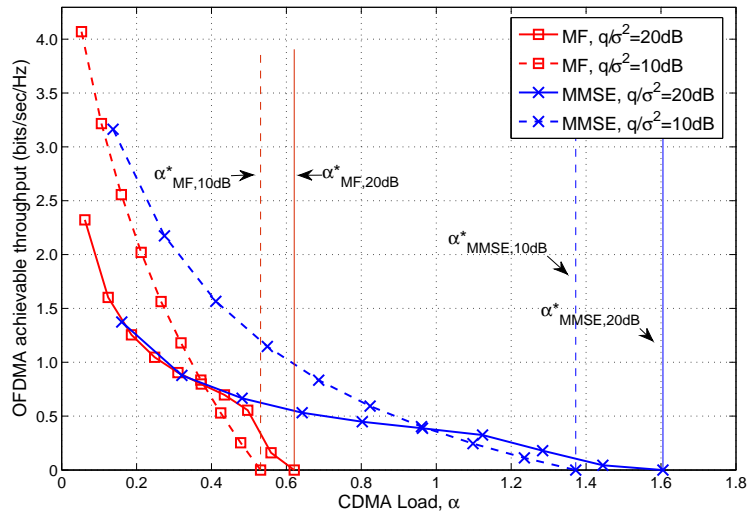}
    \caption{OFDMA achievable throughput vs CDMA load.} \label{fig.10}
\end{figure}

Only looking into the OFDMA throughput is not enough, since the scheme should protect the CDMA services well.
In Fig. \ref{fig.11}, we compare the theoretical and simulated average SINR of the CDMA users by varying the CDMA load, and applying different types of CDMA receivers.
It can be seen that these two values match very well for all the cases.
Furthermore, for both types of CDMA receivers, when the CDMA receive power is high and the CDMA load is light, the achieved CDMA SINR is higher than the target SINR. The reason is, although small $\alpha$ provides high interference margin, it cannot be exploited by the OFDMA system due to the transmit power limit. As $\alpha$ increases, the interference margin becomes less, and eventually can be fully exploited by the OFDMA system. Thus, the achieved SINR for the CDMA system converges to the target value, when the CDMA load becomes heavy.

\begin{figure}
    \centering
    \includegraphics[height=8cm]{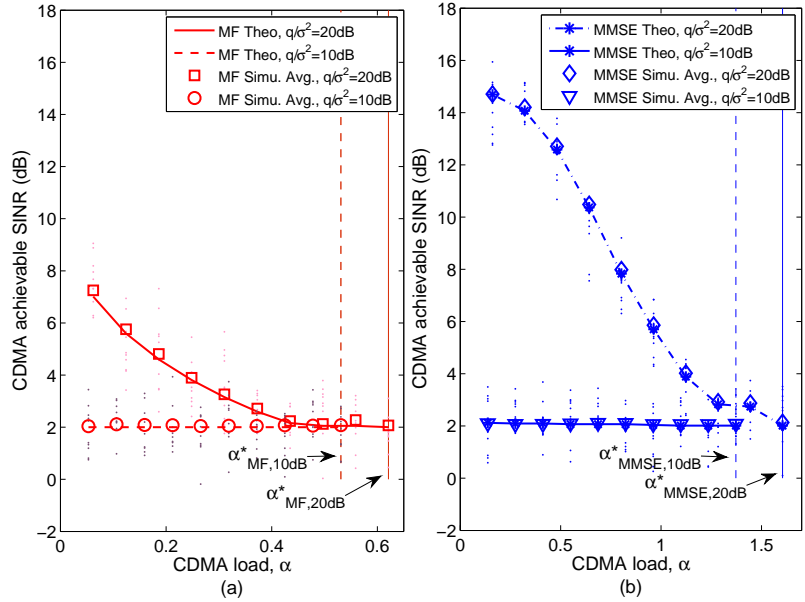}
    \caption{CDMA achievable SINR vs CDMA load: (a) MF receiver; (b) MMSE receiver} \label{fig.11}
\end{figure}

Furthermore, we verify the OFDMA achievable throughput by varying the CDMA receive SNR ${q/}{\sigma^2}$.
Two levels of CDMA load, i.e., $\alpha=0.05$ and $\alpha=0.2$ are considered.
Fig. \ref{fig.12} shows that for a given CDMA load, the OFDMA system achieves higher throughput when the CDMA system adopts the MMSE receiver.
For the same type of CDMA receiver, the OFDMA system achieves higher throughput when the CDMA load is light.
In addition, there is an optimal CDMA receive SNR value that maximizes the OFDMA achievable throughput.
The reason can be stated as follows.
When the CDMA receive SNR is small, the interference margin that can be exploited by OFDMA is small; thus, the OFDMA can only achieve low throughput even though it has extra transmission power.
As the CDMA receive SNR increases, the interference margin increases, thus OFDMA can achieve higher throughput through transmitting at higher power.
When the CDMA receive SNR further increases, although higher interference margin is provided by the CDMA system, this margin cannot be fully exploited by OFDMA system due to the transmission power limit of the OFDMA users.
On the other hand, the interference from the CDMA system to the OFDMA users increases, making the OFDMA achieved throughput deteriorate.

\begin{figure}
    \centering
    \includegraphics[width = 10cm, height=7cm]{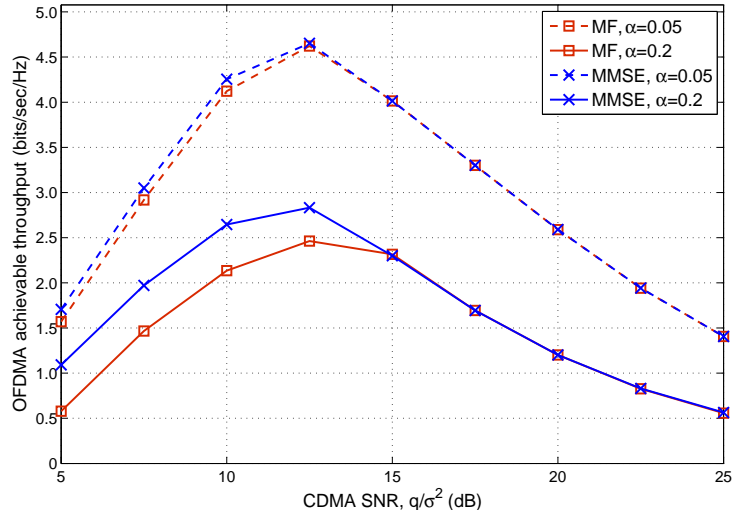}
    \caption{OFDMA achievable throughput vs CDMA receive SNR.} \label{fig.12}
\end{figure}

In the CDMA counterpart shown in Fig. \ref{fig.13}, we can see that the CDMA achieved SINR increases with the receive SNR, and the theoretical and simulated average SINR match very well. The increment of CDMA receive SNR provides higher interference margin to the OFDMA system, but the OFDMA system will not be able exploit it due to its power limit. Eventually, the interference from OFDMA to CDMA will converge to a limit.
Thus, the CDMA achieved SINR will keep increasing as the receive SNR increases.
We also obverse that although the reinforcement of the constraint should have brought along an over-protection to CDMA user, the effect is negligible as the number of subcarriers $N$ is large.

\begin{figure}
    \centering
    \includegraphics[width = 10cm, height=7cm]{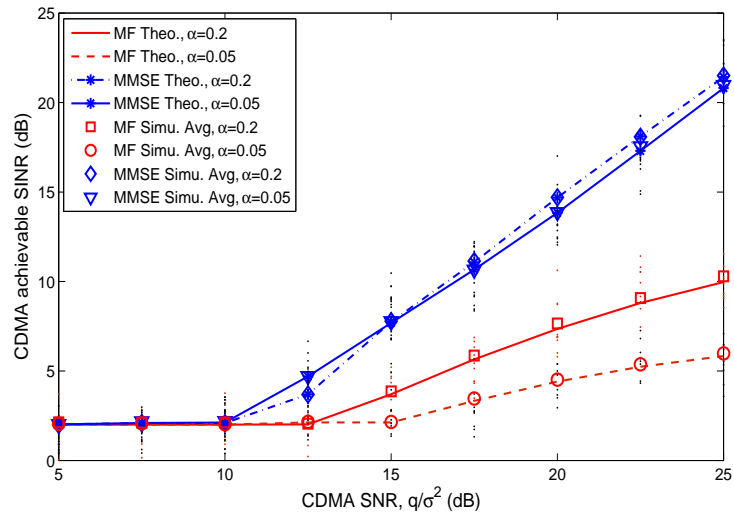}
    \caption{CDMA achievable SINR vs CDMA receive SNR.} \label{fig.13}
\end{figure}

\section{Conclusion}

In this paper, we have proposed an underlay OFDMA/CDMA SR system which allows OFDMA network to operate in the spectrum allocated to the CDMA network. We have quantified the mutual interference between uplink transmissions of the two systems, and derive the asymptotic SINR of the CDMA users. The interference margin that can be tolerated by the CDMA system is derived by the random matrix theory and the large number law.
With the interference margin together with the transmit power constraints, the resource allocation problem of OFDMA system is formulated and solved through dual decomposition method. Our simulation results have verified our theoretical analysis, and validated the effectiveness of the proposed resource allocation algorithm and its capability to protect the legacy CDMA users. The proposed SR system requires the least information flow from the CDMA system to the OFDMA system, and no upgrading of the legacy CDMA system is needed, thus it can be deployed by telecom operators to improve the spectral efficiency of their cellular networks.

\appendices

\section{Proof of Corollary 2} \label{Appendix B}

Because the power of channel response in frequency domain is the same as that of impulse response in time domain, we have
\begin{align}\label{3.8}
	 \frac{1}{N}\sum\nolimits_{n=1}^N {{{\left| {{\lambda _{u,n}}} \right|}^2}}  =  \sum\nolimits_{l=1}^L {{{\left| {{h_{u,l}}} \right|}^2}}.
\end{align}
Considering {\bf{AS1}} and {\bf{AS2}},
\begin{align} \label{3.91}
	\mathop {\lim }\limits_{N \to \infty } \frac{1}{N}\sum\nolimits_{n=1}^N {{{\left| {{\lambda _{u,n}}} \right|}^2}}  = \mathop {\lim }\limits_{L \to \infty } \sum\nolimits_{l=1}^L {{{\left| {{h_{u,l}}} \right|}^2}}.
\end{align}
Based on large number law, and considering that each path has the power of $\frac{1}{L}$ given in {\bf{AS 2}}, the RHS of \eqref{3.91} can be derived as
\begin{align} \label{3.10}
	\mathop {\lim }\limits_{L \to \infty } \sum\nolimits_{l=1}^L {{{\left| {{h_{u,l}}} \right|}^2}}  = \mathop {\lim }\limits_{L \to \infty } L{\mathbb{E}}[|{h_{u,l}}{|^2}] = 1.
\end{align}
Thus, we have $\mathop {\lim }\limits_{L \to \infty } \frac{1}{N}\sum\nolimits_{n=1}^N {{{\left| {{\lambda _{u,n}}} \right|}^2}} =1$.
Applying the large number law to the first term in the denominator of \eqref{3.3} yields
\begin{align} \label{3.11}
& \mathop {\lim }\limits_{N,U \to \infty } \frac{1}{{{N^2}}}\sum\limits_{n=1}^N {\left( {{{\left| {{\lambda _{u,n}}} \right|}^2}{\sum\limits _{i=1,i\not= u}^U}q{{\left| {{\lambda _{i,n}}} \right|}^2}} \right)}  \notag \\
 = & \mathop {\lim }\limits_{N,U \to \infty } \frac{1}{{{N^2}}}{\sum\limits _{i=1, i\not=u}^U}\left( {q\sum\limits_{n=1}^N {{{\left| {{\lambda _{u,n}}} \right|}^2}{{\left| {{\lambda _{i,n}}} \right|}^2}} } \right) \notag \\
 = & \mathop {\lim }\limits_{N, U \to \infty } \frac{q}{N}{\sum\limits _{i=1,i\not=u}^U}\left( {{\mathbb{E}}\left[ {{{\left| {{\lambda _{u,n}}} \right|}^2}{{\left| {{\lambda _{i,n}}} \right|}^2}} \right]} \right)
 = \alpha q.
\end{align}
The last equation holds because of $\frac{U}{N}\to\alpha$. Note that the OFDMA resource allocation is NOT based on the channel condition of CDMA system, since the instantaneous CSI of CDMA is not available to OFDMA; thus it is reasonable to treat them as independent variables. Then, the second term in the denominator of (10) is
\begin{align} \label{3.12}
	\mathop {\lim }\limits_{N \to \infty } \frac{1}{N}\sum\limits_{n=1}^N {\left( {{{\left| {{\lambda _{u,n}}} \right|}^2}\sigma _n^2} \right)}  = {\mathbb E}\left[ {{{\left| {{\lambda _{u,n}}} \right|}^2}\sigma _n^2} \right]
	= {\mathbb E}\left[ {\sigma _n^2} \right].
\end{align}
Substituting \eqref{3.10}-\eqref{3.12} into (10) yields \eqref{3.7}. Thus, the corollary is proven.
\hfill $\qed$

\section{Proof of Corollary 4} \label{Appendix C}

Substituting $z$ with $-\sigma^2$ and applying large number law to \eqref{3.16}, we have
\begin{align} \label{3.21}
{x_u} &= \mathop {\lim }\limits_{U, N \to \infty, \frac{U}{N}\to\alpha} \frac{1}{N}\sum\limits_{n=1}^N {\frac{{{q}{{\left| {{\lambda _{u,n}}} \right|}^2}}}{{\frac{1}{N}\sum\nolimits_{i=1}^U {\frac{{{q}}}{{1 + {x_i}}}{{\left| {{\lambda _{i,n}}} \right|}^2}}  + \sigma _n^2 + {\sigma ^2}}}} \notag \\
 & = \mathop {\lim }\limits_{U, N \to \infty, \frac{U}{N}\to \alpha } \frac{1}{N}\sum\limits_{n=1}^N {\frac{{{q}{{\left| {{\lambda _{u,n}}} \right|}^2}}}{{ {\mathbb E} \left[ {\frac{{{\alpha q}}}{{1 + {x}}}} \right] + \sigma _n^2 + {\sigma ^2}}}} \notag \\
 & = {{\mathbb E}_{n}}\left[ {\frac{{{q}}}{{ {\mathbb E} \left[ {\frac{{{\alpha q}}}{{1 + {x}}}} \right] + \sigma _n^2 + {\sigma ^2}}}} \right] .
\end{align}
Here we can see the SINR among CDMA users is uniform. Thus, we can conclude the SINR of any user is the solutions of \eqref{3.20}. Here, the corollary is proven. \hfill $\qed$

\section{Dual Decomposition in Solving (P.1)} \label{Appendix A}

By denoting ${r_{k,n}} = {\log _2}( {1 + \frac{p_{k,n} g_{k,n}} {{\alpha q + {\sigma ^2}}}} )$, the Lagrangian of (P.1) becomes
\begin{align}\label{A.1}
    L ( {\bm{P}} ,\delta , {\bm{\lambda}} ) = & \sum\limits_{k = 1}^K {\sum\limits_{n = 1}^N {{r_{k,n}}} }  - \delta \left( {\frac{1}{N}\sum\limits_{n = 1}^N {\sum\limits_{k = 1}^K {{p_{k,n}}{g_{k,n}}} }  - T} \right) \notag \\
    & - \sum\limits_{k = 1}^K {{\lambda _k}\left( {\sum\limits_{n = 1}^N {{p_{k,n}}}  - {{\bar P}_k}} \right)},
\end{align}
where $\delta$ and $\{\lambda_k\}_{k=1,...,K}$,  are the Lagrangian dual variables for interference and power constraints, respectively. The Lagrangian dual function is thus
\begin{align}\label{A.2}
   f ( \delta , {\bm{\lambda}} ) = \mathop {\max }\limits_{ {\bm{P}} } L ( {\bm{P}} ,\delta ,  {\bm{\lambda}}),
\end{align}
and its Lagrangian dual problem is
\begin{align}\label{A.3}
    \mathop {\min }\limits_{\delta, {\bm{\lambda}}} f\left( {\delta ,{\bm{\lambda}}} \right).
\end{align}
Decomposing \eqref{A.2} into $N$ independent parallel maximisation problems yields
\begin{align}\label{A.4}
   {f_n}(\delta ,\bm{\lambda} ) = \mathop {\max }\limits_{\bm{P}} \left\{ {\sum\limits_{k = 1}^K {\left( {{r_{k,n}} - {\lambda _k}{p_{k,n}} - \delta {p_{k,n}}{g_{k,n}}} \right)} } \right\}.
\end{align}
Thus $f(\delta, {\bm{\lambda}})$ can be written as
\begin{align} \label{A.5}
   f\left( {\left\{ {{\bm{\lambda}}} \right\},\left\{ {{\delta _u}} \right\}} \right)
   = \sum\limits_{n=1}^N {{f_n}\left( {\delta ,\left\{ {\bm{\lambda}} \right\}} \right)}  + \sum\limits_{k=1}^K {{\lambda _k}{{\bar P}_k}}  + \delta T.
\end{align}
Letting the first-order derivative w.r.t. $p_{k,n}$ of the objective function in \eqref{A.4} equal to zero yields
\begin{align} \label{A.6}
    {p_{k,n}} = {\Big[ {\frac{1}{{\left( {{\lambda _k} + \delta {g_{k,n}}} \right)\ln 2}} - \frac{{\alpha q + {\sigma ^2}}}{{{g_{k,n}}}}} \Big]^ + },
\end{align}
where ${[x]^ + } = \max \left( {0,x} \right)$. Substituting \eqref{A.6} into \eqref{A.4}, the subcarrier can be allocated according to
\begin{align}\label{A.7}
    {f_n}\left( {\delta ,{\bm{\lambda}}} \right) = \mathop {\max }\limits_k \left\{ {{r_{k,n}} - {\lambda _k}{p_{k,n}} - \delta {p_{k,n}}{g_{k,n}}} \right\}.
\end{align}
For each subcarrier $n$, traversing $K$ OFDMA users and allocating it to the best OFDMA user. Then, $\delta$ and $\{\lambda_k\}$ will be searched by subgradient method \cite{IEEEhowto: Boyd}. And one of the subgradient is
\begin{align} \label{A.9}
    {{\bm{d}}_k} = \left[ {\begin{array}{*{20}{c}} {T - \frac{1}{N}\sum\nolimits_{n = 1}^N {\sum\nolimits_{k = 1}^K {{p_{k,n}}{g_{k,n}}} } }\\
    {{{\bar P}_1} - \sum\nolimits_{n = 1}^N {{p_{1,n}}} }\\
     \vdots \\
     {{{\bar P}_K} - \sum\nolimits_{n = 1}^N {{p_{K,n}}} }
     \end{array}} \right].
\end{align} \hfill $\qed$

\end{document}